\newtheorem{dfn}{Definition}
\newtheorem{thm}{Theorem}[section]
\newtheorem{lem}[thm]{Lemma}
\newtheorem{cor}[thm]{Corollary}
\newtheorem{prop}[thm]{Proposition}
\title{Asymptotic property of current for a conduction model of \\ Fermi particles on finite lattice}
\author{Yamaga Kazuki}
\date{}
\begin{document}
\maketitle

\begin{abstract}
In this paper, we introduce a conduction model of Fermi particles on a finite sample, and investigate the asymptotic behavior of stationary current for large sample size. In our model a sample is described by a one-dimensional finite lattice on which Fermi particles injected at both ends move under various potentials and noise from the environment. We obtain a simple current formula. The formula has broad applicability and is used to study various potentials. When the noise is absent, it provides the asymptotic behavior of the current in terms of a transfer matrix. In particular, for dynamically defined potential cases, a relation between exponential decay of the current and the Lyapunov exponent of a relevant transfer matrix is obtained. For example, it is shown that the current decays exponentially for the Anderson model. On the other hand, when the noise exists but the potential does not, an explicit form of the current is obtained, which scales as $1/N$ for large sample size $N$. Moreover, we provide an extension to higher dimensional systems. For a three-dimensional case, it is shown that the current increases in proportion to cross section and decreases in inverse proportion to the length of the sample.
\end{abstract}

\section{Introduction}
A unified theory for nonequilibrium systems is still lacking, while statistical mechanics for equilibrium systems well-connects the microscopic and macroscopic world. 
This occurs mainly owing to the existence of various states in nonequilibrium systems. Therefore, it is important to consider a specific, physically interesting subclass of nonequilibrium states. Nonequilibrium stationary state induced by multiple thermal and particle reservoirs should be an important class, which has been studied for a long time \cite{spohn1978irreversible,ruelle2000natural,jakvsic2002mathematical}.
For example, \cite{aschbacher2007transport} considers electric conduction in mesoscopic systems as a problem of nonequilibrium stationary states of many body Fermi particle systems and derives the Landauer formula. In \cite{bruneau2016conductance}, the problem of how the structure of a sample between reservoirs determines the property of current is studied, and the equivalence of a ballistic transport and the existence of absolutely continuous spectrum is confirmed. Thus, if the absolutely continuous spectrum is empty, current goes to $0$ in the limit taking the sample size infinite.  There are many physically important models that do not have absolutely continuous spectrum such as the Anderson model and the Fibonacci Hamiltonian, which is considered as the one-dimensional model of a quasi-crystal. While the result is important, because the real sample size is finite, it is interesting to investigate the scaling of convergence. The scaling behavior depends on the sample structure. This problem has not been solved yet by the authors of \cite{bruneau2016conductance,bruneau2016absolutely}. In their model, a sample is connected to infinitely extended reservoirs at its ends; thus mathematical tools such as operator algebra and scattering theory are used. 

This study clarifies the problem of 'the scaling of the current' by introducing a simple finite dimensional conduction model. We focus on a sample described by a finite lattice on which many-body noninteracting Fermi particles are moving under various potentials and certain noise called dephasing noise from the environment. The exchange of particles between a sample and reservoirs is performed at the ends of the sample. See figure 1. This effect is described by a Lindblad-type generator. Because our model does not have an infinite part, the entire analysis is performed within linear algebra. The same model  has already been studied in \cite{prosen2008third,vznidarivc2010exact}. The difference is that we solve the time evolution using the approach of \cite{davies1977}. The following simple current formula is obtained,
\[ \mathcal{J}_\beta(N)=4(\alpha_{in}^l\alpha^r_{out}-\alpha^l_{out}\alpha_{in}^r)\int^\infty_0\langle e_1,T_s(p_N)e_1\rangle ds \]
where $(\alpha_{in}^l\alpha^r_{out}-\alpha^l_{out}\alpha_{in}^r)$ is a term determined by the strength of interaction at the both ends, and the integral is related with a two-point function which can be evaluated rigorously in various models. This formula can be applied to a wide class, which allows various types of potentials. Based on this formula, we consider how the scaling of the current is determined by potentials and noise. 
\begin{figure}[h]
    \centering
    \includegraphics[width=11cm]{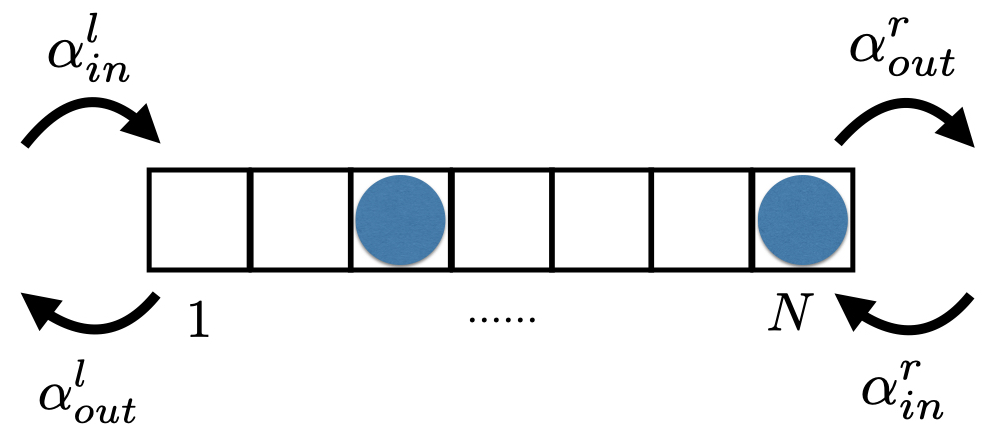}
    \caption{conduction model of Fermi particles on finite lattice}
    \label{fig:my_label}
\end{figure}

This paper is organized as follows. In section 2, we introduce a conduction model of Fermi particles on a one-dimensional lattice. By solving the time evolution of the two-point function, we show that it converges to a constant in the long time limit. In particular, as the current is described by a two-point function, we obtain a simple current formula described above. In section 3, we consider the asymptotic behavior of the current for large sample size, using the above formula. We first consider the noiseless case in subsection 3.1 and show that the current formula can be evaluated in terms of transfer matrix. This result shows that both our model and the model in \cite{bruneau2016conductance} give the same prediction for the asymptotic property of the current. In addition, in case of dynamically defined potential, such as the Anderson model and the Fibonacci Hamiltonian, the scaling of the asymptotic behavior is shown to be related with the Lyapunov exponent. In subsection 3.2, we introduce the noise called dephasing noise. We obtain an explicit form of the current if the potential is absent. The current decays scales as  $1/N$. This result coincides with that of \cite{vznidarivc2010exact}, which takes a different approach from ours. The scaling of the current for general potentials is not obtained yet. But, it is shown that for strong noise the main term of current decays as $1/N$, and the current may increase by adding strong noise to random systems. Section 4 is devoted to the generalization to higher dimensional systems.  The same formula as the one-dimensional system is obtained. If the noise exists and potential is absent, it is shown that the current increases in proportion to cross section and decreases in inverse proportion to the length of the sample. In the last section, we provide conclusions and discuss the related studies.

\section{Conduction model of Fermi particles on a one-dimensional finite lattice}
In this section, we introduce a conduction model of noninteracting Fermi particles on a one-dimensional finite lattice. First we consider the dynamics of a two-point function and its long time limit (2.1). Then in 2.2, we focus on current and obtain a simple current formula (Theorem 2.2).
\subsection{dynamics}
Let us consider a many body system of Fermi particles moving under various potentials and noise on a one-dimensional finite lattice $[1,N]\cap\mathbb{Z}$ ($N\in\mathbb{N}$). The one-particle Hilbert space is $\mathbb{C}^N$. Denote its standard basis by $\{e_n\}_{n=1}^N$. The many body system is described by the creation and annihilation operators $a^*(f),a(f)$, where $f\in\mathbb{C}^N$. Let us write $a^\#_n$ for $a^\#(e_n)$ as usual ($a^\#$ is $a^*$ or $a$). These operators satisfy the following canonical anti-commutation relations:
\[ \{a^*_i,a_j\}=\delta_{ij}I,\hspace{10pt} \{a_i,a_j \}=0, \]
where $\{A,B\}=AB+BA$, $\delta_{ij}$ is the Kronecker delta and $I$ is the identity operator on $\mathbb{C}^N$. In the sequel, we write shortly $c\times I$ as $c$ ($c\in\mathbb{C}$). Suppose that the total Hamiltonian is 
\[ H=\sum_{n=1}^{N-1}\left[-(a_n^*a_{n+1}+a_{n+1}^*a_n)+v(n)a^*_na_n\right], \]
where $v(\cdot)$ is a real-valued function called potential. Since we will consider the limit $N\to\infty$, potential $v$ is given as a bounded function on $\mathbb{N}$. Let $\mathcal{A}$ be the algebra generated by the creation and annihilation operators, and $\theta\colon\mathcal{A}\to\mathcal{A}$ be a *-automorphism determined by $\theta(a_n^\#)=-a_n^\#$.

For real numbers $\alpha_{in}^l,\alpha_{out}^l,\alpha_{in}^r,\alpha_{out}^r,\beta$ greater than or equal to $0$ (at least one of $\alpha_{in}^l,\alpha_{out}^l,\alpha_{in}^r,\alpha_{out}^r$ is not $0$), define a linear map $L\colon\mathcal{A}\to\mathcal{A}$ as
\begin{eqnarray*}
 L(A)&=&i[H,A]+\\
&&\alpha_{in}^l(2a_1\theta(A)a_1^*-\{a_1a_1^*,A\})+\alpha_{out}^l(2a_1^*\theta(A)a_1-\{a_1^*a_1,A\}) \\
&&+\alpha_{in}^r(2a_N\theta(A)a_N^*-\{a_Na_N^*,A\})+\alpha_{out}^r(2a_N^*\theta(A)a_N-\{a_N^*a_N,A\}) \\
&&+\beta\sum_{n=1}^N\left(a^*_na_n Aa^*_na_n-\frac{1}{2}\{a^*_na_n,A\}\right).
\end{eqnarray*}
Here, $[H,A]=HA-AH$.

It is obvious from the form of each term that $L$ generates a Quantum Dynamical Semigroup $\{e^{tL}\}_{t\ge0}$ on $\mathcal{A}$ \cite{lindblad1976generators}. That is, $e^{tL}$ is a CP (completely positive) map preserving identity (state transformation) for every $t\ge0$. The physical meaning of each term is as follows:

\underline{$i[H,A]$}\\
This term represents the Hamiltonian dynamics of many particles moving independently by a one-particle Hamiltonian
\[ (h\psi)(n)=-\psi(n+1)-\psi(n-1)+v(n)\psi(n) \]
($\psi(0)=\psi(N+1)=0$). It operates as
\[ i[H,a^*(f)a(g)]=ia^*(hf)a(g)-ia^*(f)a(hg). \]

\underline{The terms with coefficients $\alpha_{in}^l,\alpha_{out}^l,\alpha_{in}^r,\alpha_{out}^r$}\\
These terms represent the effects of adding a particle to site 1, removing from site 1, adding to site $N$ and removing from site $N$, respectively. Put $p_n=|e_n\rangle\langle e_n|\ (n=1,2,\cdots,N)$, 1-rank projections corresponding to the basis $\{e_n\}_{n=1}^N$, then these terms operate as
\[ 2a_1\theta(a(f)a^*(g))a_1^*-\{a_1a_1^*,a(f)a^*(g)\}=-a(p_1f)a^*(g)-a(f)a^*(p_1g) ,\]
\[ 2a_1^*\theta(a^*(f)a(g))a_1-\{a_1^*a_1,a^*(f)a(g)\}=-a^*(p_1f)a(g)-a^*(f)a(p_1g) \]
(replace $p_1$ by $p_N$ in the case of $N$). The dynamics generated by these terms and $i[H,A]$ is a special case of those in \cite{prosen2008third,davies1977}.

\underline{The term with coefficient $\beta$ (dephasing noise)}\\
This term represents noise from the environment called dephasing noise. Dephasing noise preserves the number of particles and destroys the coherence. Let us check this property. Denote
\[ D_n(A)=2a^*_na_n Aa^*_na_n-\{a^*_na_n,A\} \]
($n=1,2,\cdots,N$), then since $D_n$ commutes with each other, we have
\[ e^{t\sum_{n=1}^ND_n}=\prod_{n=1}^Ne^{tD_n}. \]
Easy calculation shows that
\[ D_n(a^*_ia_j)=\begin{cases}
0 & (i=j=n,\ i,j\neq n) \\
-a^*_ia_j & (\mathrm{otherwise}),
\end{cases} \]
\[ e^{tD_n}(a^*_ia_j)=\begin{cases}
a^*_ia_j & (i=j=n,\ i,j\neq n) \\
e^{-t}a^*_ia_j & (\mathrm{otherwise}).
\end{cases} \]
Recall that for every state $\omega$ on $\mathcal{A}$, its two-point function is described by a positive operator on $\mathbb{C}^N$: there is an operator $R\colon\mathbb{C}^N\to\mathbb{C}^N$ such that $0\le R\le I$ and
\[ \omega(a^*_ia_j)=\langle e_j,Re_i\rangle. \]
In the one-particle system, define a linear map $d_n\colon M_N(\mathbb{C})\to M_N(\mathbb{C})$ as
\[ d_n(a)=2p_nap_n-\{p_n,a\},\ a\in M_N(\mathbb{C})\]
($M_N(\mathbb{C})$ is the set of $N\times N$ complex matrices), then
\[ e^{td_n}(a)=a+(1-e^{-t})d_n(a). \]
If $i=j=n$ or $i,j\neq n$,
\[ \langle e_j,e^{td_n}(R)e_i\rangle=\langle e_j,Re_i\rangle=\omega(e^{tD_n}(a^*_ia_j)), \]
and if one of $i,j$ is $n$,
\[ \langle e_j,e^{td_n}(R)e_i\rangle=\langle e_j,Re_i\rangle+(1-e^{-t})\langle e_j,(-R)e_i\rangle=\omega(e^{tD_n}(a^*_ia_j)). \]
Therefore, the dynamics of the two-point function is described by $\displaystyle\prod^N_{n=1}e^{td_n}(R)$:
\[ \omega\left(e^{t\sum_{n=1}^ND_n}(a^*_ia_j)\right)=\omega\left(\prod_{n=1}^Ne^{tD_n}(a^*_ia_j)\right)=\left\langle e_j,\prod_{n=1}^Ne^{td_n}(R)e_i\right\rangle=\left\langle e_j,e^{t\sum_{n=1}^Nd_n}(R)e_i\right\rangle. \] 
Set $d(a)=\displaystyle\frac{1}{2}\sum_{n=1}^Nd_n(a)=\sum_{n=1}p_nap_n-a$, then $d^2=-d$ and
\[ e^{td}(a)=a+(1-e^{-t})d(a)=e^{-t}a+(1-e^{-t})\sum_{n=1}^Np_nap_n .\]
The pure state $|\psi\rangle\langle\psi|$ is transformed to
\[ e^{-t}|\psi\rangle\langle\psi|+(1-e^{-t})\sum_{n=1}^N|\psi(n)|^2p_n .\]
Thus, this dynamics destroys the coherence and transform a state to a convex combination of localized states $p_n$.

$L$ consists of the above three types of terms. Stationary current induced by the dynamics $e^{tL}$ is the main topic in this paper. From the above discussions it turns out that the dynamics of the two-point function is described by that of the one-particle system. Suppose that the two-point function of the state $\omega\circ e^{tL}$ is expressed as
\[ \omega\circ e^{tL}(a^*_ia_j)=\langle e_j,R(t)e_i\rangle, \]
then by calculating 
\[ \frac{d}{dt}\omega\circ e^{tL}(a^*_ia_j)=\omega\circ e^{tL}(L(a^*_ia_j)), \]
we obtain the following differential equation for $R(t)$: 
\begin{eqnarray*}
\frac{d}{dt}R(t)&=&-i[h,R(t)]-\{(\alpha_{in}^l+\alpha_{out}^l)p_1+(\alpha_{in}^r+\alpha_{out}^r)p_N,R(t)\}\\
&&+\beta\left(\sum_{n=1}^Np_nR(t)p_n-R(t)\right)+2\alpha_{in}^l p_1+2\alpha_{in}^r p_N  ,
\end{eqnarray*}
\[ R(0)=R. \]
It is easy to check that
\[ R(t)=T_t(R)+\int^t_0T_s(2\alpha_{in}^l p_1+2\alpha_{in}^r p_N)ds \]
is a solution of this equation, where $T_t$ is an operator semigroup on $M_N(\mathbb{C})$ generated by 
\[ l\colon a\mapsto-i[h,a]-\{(\alpha_{in}^l+\alpha_{out}^l)p_1+(\alpha_{in}^r+\alpha_{out}^r)p_N,a\}+\beta\left(\sum_{n=1}^Np_nap_n-a\right). \]
$T_t=e^{tl}$ is a CP map which does not preserve identity.

Let us consider the long time limit $t\to\infty$. In the case where $\beta=0$,
\[ T_t(a)=e^{-ith_D}ae^{ith_D^*} \]
for
\[ h_D=h-i(\alpha_{in}^l+\alpha_{out}^l)p_1-i(\alpha_{in}^r+\alpha_{out}^r)p_N. \]
Since the imaginary part of every eigenvalue of $h_D$ is less than $0$, $\displaystyle\lim_{t\to\infty}e^{-ith_D}=0$. Thus, we get
\[ \lim_{t\to\infty}R(t)=\int^\infty_0T_s(2\alpha_{in}^l p_1+2\alpha_{in}^r p_N)ds\equiv R_\infty .\]
The integral of the right hand side converges, because
\begin{eqnarray*}
0\le\int^{t_1}_{t_2}T_s(2\alpha_{in}^l p_1+2\alpha_{in}^r p_N)ds&\le&\int^{t_1}_{t_2}T_s(2(\alpha_{in}^l+\alpha_{out}^l)p_1+2(\alpha_{in}^r+\alpha_{out}^r)p_N)ds \\
&=&\left[T_s(I)\right]^{t_1}_{t_2}\to0\ (t_1,t_2\to\infty). 
\end{eqnarray*}

Note that $R_\infty$ does not depend on $R$. This means that whatever the initial state is, the two-point function converges to the same value $\langle e_j,R_\infty e_i\rangle$. Moreover, it can be shown that every state converges to the quasi free state determined by this two-point function \cite{davies1977}.

In the case where $\beta>0$, we have the same result for the two-point function.
\begin{thm}
$\displaystyle\lim_{t\to\infty}T_t=\lim_{t\to\infty}e^{tl}=0$.
\end{thm}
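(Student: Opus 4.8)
The plan is to work on the finite-dimensional space $M_N(\mathbb{C})$ equipped with the Hilbert–Schmidt inner product $\langle a,b\rangle_{\mathrm{HS}}=\mathrm{Tr}(a^*b)$ and to show that every eigenvalue of $l$ has strictly negative real part; since $M_N(\mathbb{C})$ is finite dimensional this forces $\|e^{tl}\|\to0$. Write $l=l_H+l_B+l_D$ with $l_H(a)=-i[h,a]$, $l_B(a)=-\{K,a\}$ where $K=(\alpha_{in}^l+\alpha_{out}^l)p_1+(\alpha_{in}^r+\alpha_{out}^r)p_N\ge0$, and $l_D=\beta d$. First I would record the symmetry type of each piece with respect to $\langle\cdot,\cdot\rangle_{\mathrm{HS}}$: the operator $l_H$ is skew-adjoint because $e^{tl_H}(a)=e^{-ith}ae^{ith}$ is unitary conjugation; $l_B$ is self-adjoint with $\langle a,l_B(a)\rangle_{\mathrm{HS}}=-\mathrm{Tr}(K aa^*)-\mathrm{Tr}(K a^*a)\le0$ since $K\ge0$; and $l_D=\beta d$ is self-adjoint with $l_D\le0$, which follows from $d^2=-d$ (the eigenvalues of $d$ are $0$ and $-1$). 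Consequently the numerical range of $l$ lies in $\{\mathrm{Re}\le0\}$, so in particular $\sigma(l)\subset\{\mathrm{Re}\le0\}$.

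The core of the argument is to exclude eigenvalues on the imaginary axis. Suppose $l(a)=\mu a$ with $a\ne0$ and $\mathrm{Re}\,\mu=0$. Taking the real part of $\langle a,l(a)\rangle_{\mathrm{HS}}=\mu\|a\|_{\mathrm{HS}}^2$ and using that $\langle a,l_H(a)\rangle_{\mathrm{HS}}$ is purely imaginary, I obtain $\langle a,l_B(a)\rangle_{\mathrm{HS}}+\langle a,l_D(a)\rangle_{\mathrm{HS}}=0$; as both terms are $\le0$, each vanishes. From $\langle a,l_D(a)\rangle_{\mathrm{HS}}=0$ and $\beta>0$ I conclude $d(a)=0$, i.e. $a=\sum_n p_n a p_n$ is \emph{diagonal} in the basis $\{e_n\}$ — this is exactly where $\beta>0$ enters and distinguishes the present case from $\beta=0$. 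From $\langle a,l_B(a)\rangle_{\mathrm{HS}}=0$, i.e. $\mathrm{Tr}(Kaa^*)=\mathrm{Tr}(Ka^*a)=0$, I get $ae_1=a^*e_1=0$ and/or $ae_N=a^*e_N=0$ according to which boundary coefficients are positive; since $a$ is diagonal this forces the corresponding diagonal entry ($a_{11}$ or $a_{NN}$) to be $0$, and by hypothesis at least one of these holds.

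Finally I would feed diagonality back into the eigenvalue equation to propagate this single zero along the chain. For diagonal $a$ one has $l_D(a)=0$ and $l_B(a)=-2Ka$ diagonal, while $l_H(a)=-i[h,a]$ has vanishing diagonal, so the off-diagonal part of $l(a)=\mu a$ forces $[h,a]=0$. Writing this out entrywise gives $h_{mn}(a_{nn}-a_{mm})=0$ for all $m,n$; since $h$ has nonzero nearest-neighbour entries ($h_{n,n+1}=-1$) and the chain $1-2-\cdots-N$ is connected, all diagonal entries of $a$ are equal, and together with the boundary vanishing this yields $a=0$, a contradiction. Hence $\sigma(l)\subset\{\mathrm{Re}<0\}$ and $\lim_{t\to\infty}e^{tl}=0$. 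I expect the only delicate point to be the simultaneous exploitation of the three dissipative mechanisms in the second step: dephasing pins the eigenvector to the diagonal, the boundary coupling kills one site, and the connectivity of the hopping Hamiltonian then annihilates the whole diagonal; the remaining estimates are routine.
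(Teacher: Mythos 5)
Your proposal is correct and takes essentially the same route as the paper: the identical decomposition of $l$ into a skew-adjoint Hamiltonian part and two negative self-adjoint dissipators on $(M_N(\mathbb{C}),\langle\cdot,\cdot\rangle_{HS})$, the same numerical-range argument giving $\sigma(l)\subset\{\mathrm{Re}\le0\}$, and the same exclusion of purely imaginary eigenvalues by using $\langle a,Za\rangle_{HS}=0$ to force the eigenvector diagonal and the boundary term to kill $a_{11}$ (or $a_{NN}$). The only cosmetic difference is the final propagation step: the paper zeroes the diagonal entries one at a time via $0=x\,a_{n,n+1}=(la)_{n,n+1}=-ia_{n+1,n+1}$, whereas you note that for diagonal $a$ the eigenvalue equation forces $[h,a]=0$, i.e.\ $h_{mn}(a_{nn}-a_{mm})=0$, and invoke connectivity of the nearest-neighbour hopping --- the same entrywise computation, slightly more cleanly packaged.
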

\begin{proof}
Recall that $M_N(\mathbb{C})$ is a Hilbert space for the Hilbert-Schmidt inner product, $\langle a,b\rangle_{HS}=\mathrm{Tr}a^*b$. Let us decompose the generator of $l\colon M_N(\mathbb{C})\to M_N(\mathbb{C})$ as $l=-iX-Y-\beta Z$ for $X,Y,Z\colon M_N(\mathbb{C})\to M_N(\mathbb{C})$ defined as
\[ Xa=[h,a] \]
\[ Ya=\{(\alpha_{in}^l+\alpha_{out}^l)p_1+(\alpha_{in}^r+\alpha_{out}^r)p_N,a\} \]
\[ Za=a-\sum_{n=1}^Np_nap_n .\]
$X,Y,Z$ are self-adjoint and especially $Y,Z$ are positive. Let us check that $Z$ is positive: 
\[ \langle a,Za\rangle_{HS}=\mathrm{Tr}\left(a^*a-a^*\sum_{n=1}^Np_nap_n\right)=\sum_{n=1}^N\mathrm{Tr}p_na^*(I-p_n)ap_n\ge0,\ ^\forall a\in M_N(\mathbb{C}). \]
Let $x\in\mathbb{C}$ be an eigenvalue of $l$ and $a\in M_N(\mathbb{C})$ be a corresponding unit eigenvector, that is, $x,a$ satisfy
\[ la=xa ,\hspace{10pt}\langle a,a\rangle_{HS}=1. \]
By $l=-iX-Y-\beta Z$, $\mathrm{Re}x$, the real part of $x$, satisfies that
\[ \mathrm{Re}x=-\langle a,(Y+\beta Z)a\rangle\le0 .\]
If $\mathrm{Re}x=0$, we have
\begin{equation}
\langle a,Ya\rangle=0 ,
\end{equation}
\begin{equation}
\langle a,Za\rangle=0, 
\end{equation} 
since $Y$ and $Z$ are positive. By equation(2),
\[ \mathrm{Tr}p_na^*(I-p_n)ap_n=0,\ n=1,2,\cdots,N ,\]
\[ \to (I-p_n)ap_n=0 .\]
Thus, $a$ is diagonalized for the basis $\{e_n\}_{n=1}^N$ (we write its entry as $a_{ij}$). Assume that $\alpha_{in}^l+\alpha_{out}^l>0$ (otherwise $\alpha_{in}^r+\alpha_{out}^r>0$ must hold and repeat the following processes from $N$ instead of $1$), then by equation(1)
\[ a_{11}=\mathrm{Tr}p_1ap_1=0 .\]
Since $a$ is diagonalized,
\[ 0=xa_{12}=(la)_{12}=ia_{11}+ia_{13}-ia_{22}-(\alpha_{in}^l+\alpha_{out}^l+\beta)a_{12}=-ia_{22}, \]
\[ 0=xa_{23}=(la)_{23}=ia_{22}+ia_{24}-ia_{33}-ia_{13}-(\alpha_{in}^l+\alpha_{out}^l)a_{23}=-ia_{33}. \]
Repeat these processes until $0=(la)_{N-1\ N}$, then we finally get $a_{11}=a_{22}=\cdots=a_{NN}=0$. This implies that $a=0$. However $a=0$ contradicts to the assumption that $\langle a,a\rangle_{HS}=1$. Thus, $\mathrm{Re}x<0$ must hold for every eigenvalue of $l$ and 
\[ \lim_{t\to\infty}e^{tl}=0. \]
\end{proof}
By this theorem, in the case where $\beta>0$ we also have
\[ \lim_{t\to\infty}R(t)=\int^\infty_0T_s(2\alpha_{in}^l p_1+2\alpha_{in}^r p_N)ds\equiv R_\infty . \]

\subsection{current formula}
In this subsection we will focus on current. Since current is expressed by two-point function, it converges to a constant in the limit $t\to\infty$. We will consider how the sign of the current is determined by the relation of constants $\alpha_{in}^l,\alpha_{out}^l,\alpha_{in}^r,\alpha_{out}^r$. The current is shown to be expressed by a simple formula (Theorem 2.2). 

At first, recall that the observable of current from site $n$ to $n+1$ is 
\[ j_n=-i(a_n^*a_{n+1}-a^*_{n+1}a_n). \]
As shown in the previous subsection, for any state $\omega$ the limit $\displaystyle\lim_{t\to\infty}\omega\circ e^{tL}(j_n)$ exists and is independent of $\omega$. In fact it does not depend on $n$. Let us check it. By the definition of generator, for any $\epsilon>0$ there is $h>0$ such that
\[ \left\|\frac{e^{hL}(a_n^*a_n)-a^*_na_n}{h}-L(a^*_na_n)\right\|<\epsilon. \]
Thus we have
\[ |\omega\circ e^{tL}(L(a^*_na_n))|<\epsilon+\left|\omega\circ e^{tL}\left(\frac{e^{hL}(a_n^*a_n)-a^*_na_n}{h}\right)\right|,\ ^\forall t\ge0 \]
and $\displaystyle\limsup_{t\to\infty}|\omega\circ e^{tL}(L(a^*_na_n))|\le\epsilon$. Since $\epsilon$ is arbitrary, $\displaystyle\lim_{t\to\infty}\omega\circ e^{tL}(L(a^*_na_n))=0$. This equation and 
\[ L(a^*_na_n)=-i(a_{n-1}^*a_n-a^*_na_{n-1})+i(a^*_na_{n+1}-a^*_{n+1}a_n)=j_{n-1}-j_n ,\ n=2,\cdots,N-1\]
show that the limit of the current does not depend on $n$. We denote the limit of current by $\mathcal{J}_\beta(N)=\displaystyle\lim_{t\to\infty}\omega\circ e^{tL}(j_1)$ (it depends on the sample size $N$). Then it is expressed as
\begin{eqnarray*}
\mathcal{J}_\beta(N)&=&-i\langle e_2,R_\infty e_1\rangle +i\langle e_1,R_\infty e_2\rangle  \\
&=&2\mathrm{Im}\langle e_2,R_\infty e_1\rangle  .
\end{eqnarray*}

$\mathcal{J}_\beta(N)$ has the following simple expression. This is one of our main results in this paper. 
\begin{thm}
\[ \mathcal{J}_\beta(N)=4(\alpha_{in}^l\alpha_{out}^r-\alpha_{out}^l\alpha_{in}^r)\int^\infty_0\langle e_1,T_t(p_N)e_1\rangle dt=-4(\alpha_{in}^l\alpha_{out}^r-\alpha_{out}^l\alpha_{in}^r)\langle e_1,l^{-1}(p_N)e_1\rangle. \]
\end{thm}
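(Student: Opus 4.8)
The plan is to avoid computing all the entries of $R_\infty$ and instead combine three ingredients: a steady-state particle-balance equation at the left boundary, a trace sum rule, and a transpose symmetry coming from the reality of $h$. Throughout write $S=2\alpha_{in}^lp_1+2\alpha_{in}^rp_N$ and $\gamma_1=\alpha_{in}^l+\alpha_{out}^l$, $\gamma_N=\alpha_{in}^r+\alpha_{out}^r$, so that $l(a)=-i[h,a]-\{\gamma_1p_1+\gamma_Np_N,a\}+\beta(\sum_np_nap_n-a)$ and, by the discussion preceding Theorem 2.1, $R_\infty=\int_0^\infty T_s(S)\,ds=-l^{-1}(S)$. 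Since $T_t=e^{tl}$ and $\lim_{t\to\infty}e^{tl}=0$ (Theorem 2.1), one has $\int_0^\infty T_t(p_N)\,dt=-l^{-1}(p_N)$, so the two expressions on the right-hand side of the statement agree automatically; it remains only to establish the first equality.

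First I would derive the boundary balance. Computing $L(a_1^*a_1)$ from the CAR and $\theta(a_1^*a_1)=a_1^*a_1$: the Hamiltonian part gives $i[H,a_1^*a_1]=-j_1$ (the analogue of $L(a_n^*a_n)=j_{n-1}-j_n$ with no bond to the left of site $1$), the dephasing part annihilates the number operator, the boundary terms at $N$ cancel, and the left boundary terms contribute $2\alpha_{in}^l(1-a_1^*a_1)-2\alpha_{out}^la_1^*a_1$. Hence $L(a_1^*a_1)=-j_1+2\alpha_{in}^l-2\gamma_1a_1^*a_1$. Running the same limiting argument the text uses to show $\lim_t\omega\circ e^{tL}(L(a_n^*a_n))=0$, together with $\lim_t\omega\circ e^{tL}(j_1)=\mathcal{J}_\beta(N)$ and $\lim_t\omega\circ e^{tL}(a_1^*a_1)=\langle e_1,R_\infty e_1\rangle$, yields
\[ \mathcal{J}_\beta(N)=2\alpha_{in}^l-2\gamma_1\langle e_1,R_\infty e_1\rangle. \]

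Next, expand $\langle e_1,R_\infty e_1\rangle=\langle p_1,R_\infty\rangle_{HS}=2\alpha_{in}^lA+2\alpha_{in}^rB$, where $A=\langle p_1,-l^{-1}(p_1)\rangle_{HS}$ and $B=\langle p_1,-l^{-1}(p_N)\rangle_{HS}=\int_0^\infty\langle e_1,T_t(p_N)e_1\rangle\,dt$. I would then produce two identities. The sum rule comes from $\mathrm{Tr}\,l(a)=-2\gamma_1a_{11}-2\gamma_Na_{NN}$ (the commutator and dephasing parts are traceless); applied to $a=-l^{-1}(p_1)$, for which $l(a)=-p_1$ and $\mathrm{Tr}\,l(a)=-1$, this gives $2\gamma_1A+2\gamma_NC=1$ with $C=\langle p_N,-l^{-1}(p_1)\rangle_{HS}$. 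The symmetry $B=C$ follows from the transpose relation $\tau l\tau=l^\dagger$ (where $\tau(a)=a^T$), valid because $h$ is real symmetric and $p_1,p_N$ are real; this forces $(l^{-1}(p_N))_{11}=(l^{-1}(p_1))_{NN}$. Combining, $1-2\gamma_1A=2\gamma_NB$, whence
\[ \mathcal{J}_\beta(N)=2\alpha_{in}^l(1-2\gamma_1A)-4\gamma_1\alpha_{in}^rB=4(\gamma_N\alpha_{in}^l-\gamma_1\alpha_{in}^r)B=4(\alpha_{in}^l\alpha_{out}^r-\alpha_{out}^l\alpha_{in}^r)B, \]
which is the claimed formula, and the second equality in the statement then follows from $B=-\langle e_1,l^{-1}(p_N)e_1\rangle$.

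The main obstacle is the careful operator-algebra computation of $L(a_1^*a_1)$: getting every sign and factor of two right through the CAR, the $\theta$-twist, and the vanishing of the dephasing and right-boundary contributions. A secondary subtlety is recognizing that the reality of $h$ supplies the transpose symmetry $B=C$, which is exactly what converts the asymmetric-looking combination $\gamma_N\alpha_{in}^l-\gamma_1\alpha_{in}^r$ into the antisymmetric $\alpha_{in}^l\alpha_{out}^r-\alpha_{out}^l\alpha_{in}^r$.
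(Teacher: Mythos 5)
Your proof is correct, and it reaches the paper's two key intermediate identities by a genuinely different route. The paper works entirely in the one-particle space: starting from $\mathcal{J}_\beta(N)=-\int_0^\infty\mathrm{Tr}\,i[h,p_1]e^{tl}(S)\,dt$ with $S=2\alpha_{in}^lp_1+2\alpha_{in}^rp_N$, it splits the pairing as $\mathrm{Tr}\,p_1l(\cdot)+2\gamma_1\mathrm{Tr}\,p_1(\cdot)$ and integrates $\mathrm{Tr}\,p_1le^{tl}(S)$ by parts, landing on exactly your balance identity $\mathcal{J}_\beta(N)=2\alpha_{in}^l-2\gamma_1\langle e_1,R_\infty e_1\rangle$, which you instead derive from the many-body computation $L(a_1^*a_1)=-j_1+2\alpha_{in}^l-2\gamma_1a_1^*a_1$ plus the stationarity argument; your operator computation checks out (the right-boundary terms cancel because $a_1^*a_1$ is even, and $D_n(a_1^*a_1)=0$ for all $n$), and it has the merit of making the formula read as particle conservation at the left boundary. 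The more substantive divergence is in the sum rule: the paper integrates $\frac{d}{dt}e^{tl}(I)=-e^{tl}(2\gamma_1p_1+2\gamma_Np_N)$ to get $I=\int_0^\infty e^{tl}(2\gamma_1p_1+2\gamma_Np_N)\,dt$ and pairs with $p_1$, which gives $2\gamma_1A+2\gamma_NB=1$ directly in terms of your $B$; your dual version $\mathrm{Tr}\,l(a)=-2\gamma_1a_{11}-2\gamma_Na_{NN}$ gives $2\gamma_1A+2\gamma_NC=1$ and therefore forces the extra step $B=C$, which you justify correctly via the transpose symmetry $\tau l\tau=l^\dagger$ coming from $h^T=h$ (a nice reciprocity observation, though it costs an extra lemma the paper never needs). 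One point in your commentary is misattributed: $\gamma_N\alpha_{in}^l-\gamma_1\alpha_{in}^r=\alpha_{in}^l\alpha_{out}^r-\alpha_{out}^l\alpha_{in}^r$ is a pure algebraic identity requiring no symmetry at all; the reality of $h$ is needed only to merge $B$ and $C$ into a single quantity, and it becomes unnecessary with the paper's form of the sum rule. Finally, your ``automatic'' second equality deserves one line: in finite dimensions $\lim_{t\to\infty}e^{tl}=0$ forces every eigenvalue of $l$ to have negative real part, so $\int_0^\infty e^{tl}\,dt$ converges and equals $-l^{-1}$ (the paper instead inserts the regularization $\int_0^\infty e^{-\epsilon t}e^{tl}\,dt=(\epsilon-l)^{-1}$ and lets $\epsilon\downarrow0$).
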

\begin{proof}
By $2\mathrm{Im}|e_1\rangle\langle e_2|=-i[h,p_1]$ and the definition of $R_\infty$,
\begin{eqnarray*}
\mathcal{J}_\beta(N)&=&-\int^\infty_0\mathrm{Tr}i[h,p_1]e^{tl}(2\alpha_{in}^l p_1+2\alpha_{in}^r p_N)dt \\
&=&-\int^\infty_0\mathrm{Tr}p_1le^{tl}(2\alpha_{in}^l p_1+2\alpha_{in}^r p_N)dt-2(\alpha_{in}^l+\alpha_{out}^l)\int^\infty_0\mathrm{Tr}p_1e^{tl}(2\alpha_{in}^l p_1+2\alpha_{in}^r p_N)dt \\
&=&2\alpha_{in}^l\mathrm{Tr}p_1-4\alpha_{in}^l(\alpha_{in}^l+\alpha_{out}^l)\int^\infty_0\mathrm{Tr}p_1e^{tl}(p_1)dt-4\alpha_{in}^r(\alpha_{in}^l+\alpha_{out}^l)\int^\infty_0\mathrm{Tr}p_1e^{tl}(p_N)dt .
\end{eqnarray*}
By the equation
\[ I=-\left[e^{tl}(I)\right]^\infty_0=\int^\infty_0e^{tl}(2(\alpha_{in}^l+\alpha_{out}^l)p_1+2(\alpha_{in}^r+\alpha_{out}^r)p_N)dt, \]
we have
\[ \alpha_{in}^l\mathrm{Tr}p_1-2\alpha_{in}^l(\alpha_{in}^l+\alpha_{out}^l)\int^\infty_0\mathrm{Tr}p_1e^{tl}(p_1)dt=2\alpha_{in}^l(\alpha_{in}^r+\alpha_{out}^r)\int^\infty_0\mathrm{Tr}p_1e^{tl}(p_N)dt .\]
Combining these equations, we get
\[ \mathcal{J}_\beta(N)=4(\alpha_{in}^l\alpha_{out}^r-\alpha_{out}^l\alpha_{in}^r)\int^\infty_0\mathrm{Tr}p_1e^{tl}(p_N)dt .\]
In order to obtain the latter equation of the theorem, we use the well-known formula for operator semigroups: for any $\epsilon>0$
\[ \int^\infty_0e^{-\epsilon t}e^{tl}dt=(\epsilon-l)^{-1} \]
holds. As discussed before, the real part of every eigenvalue of $l$ is less than $0$. This implies that $\mathrm{ker}l=\{0\}$ and $l$ is invertible. Thus, we get
\begin{eqnarray*}
\int^\infty_0\langle e_1,T_t(p_N)e_1\rangle dt&=&\lim_{\epsilon\downarrow0}\int^\infty_0e^{-\epsilon t}\langle e_1,T_t(p_N)e_1\rangle dt \\
&=&\lim_{\epsilon\downarrow0}\langle e_1,(\epsilon-l)^{-1}(p_N)e_1\rangle \\
&=&-\langle e_1,l^{-1}(p_N)e_1\rangle .
\end{eqnarray*}
\end{proof}
Since $\int^\infty_0\langle e_1,T_t(p_N)e_1\rangle dt>0$, the sign of $\mathcal{J}_\beta(N)$ is completely determined by the coefficient $\alpha_{in}^l\alpha_{out}^r-\alpha_{out}^l\alpha_{in}^r$. Let us check that $\int^\infty_0\langle e_1,T_t(p_N)e_1\rangle dt>0$.
\begin{eqnarray*}
\langle e_1,T_t(p_N)e_1\rangle&=&\sum_{n=0}^\infty\left\langle e_1,\frac{(tl)^n}{n!}(p_N)e_1\right\rangle \\
&=&\left\langle e_1,\frac{(tl)^{2N}}{(2N)!}(p_N)e_1\right\rangle+\sum_{n=2N+1}^\infty\left\langle e_1,\frac{(tl)^n}{n!}(p_N)e_1\right\rangle \\
&=&\frac{t^{2N}}{(2N)!}{}_{2N}\mathrm{C}_N\langle e_1,h^Np_Nh^Ne_1\rangle+\sum_{n=2N+1}^\infty\left\langle e_1,\frac{(tl)^n}{n!}(p_N)e_1\right\rangle \\
&=&\frac{t^{2N}}{(N!)^2}+\sum_{n=2N+1}^\infty\left\langle e_1,\frac{(tl)^n}{n!}(p_N)e_1\right\rangle.
\end{eqnarray*}
Since 
\[ \left|\sum_{n=2N+1}^\infty\left\langle e_1,\frac{(tl)^n}{n!}(p_N)e_1\right\rangle\right|\le t^{2N+1}\sum_{n=2N+1}^\infty\left\langle e_1,\frac{l^n}{n!}(p_N)e_1\right\rangle \]
holds for $0\le t<1$, for sufficiently small $t>0$ we have
\[ \langle e_1,T_t(p_N)e_1\rangle>0. \]

\section{Asymptotic behavior of current}
In the previous section, we obtained a current formula applicable in general settings (Theorem 2.2). In this section, using this formula, we investigate how potentials and noise determine the asymptotic behavior of the current $\mathcal{J}_\beta(N)$ for large sample size $N$. Since we would like to consider the situation that the current $\mathcal{J}_\beta(N)$ is not $0$, let $\alpha_{in}^l+\alpha_{out}^l,\alpha_{in}^r+\alpha_{out}^r>0$. At first, we deal with the noiseless case ($\beta=0$). And next, the case where $\beta>0$, mainly $v=0$, is considered.

\subsection{$\beta=0$ : noiseless case}
In this subsection, we first prove the following proposition, which is applicable to arbitrary potentials.
\begin{prop}
\[ -\langle e_1,l^{-1}(p_N)e_1\rangle
=\frac{1}{2\pi}\int_\mathbb{R}\left|\langle e_1,(h_D-E)^{-1}e_N\rangle\right|^2dE .\]
\end{prop}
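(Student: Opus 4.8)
The plan is to begin from the time-integral form of the left-hand side. By Theorem 2.2 one reads off the identity $-\langle e_1,l^{-1}(p_N)e_1\rangle=\int_0^\infty\langle e_1,T_t(p_N)e_1\rangle\,dt$, so it suffices to re-express this integral as a resolvent integral over energy. Since $\beta=0$, I would use the explicit formula $T_t(a)=e^{-ith_D}a\,e^{ith_D^*}$ recorded before Theorem 2.1. Writing $p_N=|e_N\rangle\langle e_N|$ and setting $\phi_t=e^{-ith_D}e_N$, the integrand collapses to a modulus square, namely $\langle e_1,T_t(p_N)e_1\rangle=\bigl|\langle e_1,e^{-ith_D}e_N\rangle\bigr|^2$, turning an operator computation into one about a single scalar function of $t$.

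Next I would introduce the scalar function $g(t)=\langle e_1,e^{-ith_D}e_N\rangle$ for $t\ge0$ and $g(t)=0$ for $t<0$. The key analytic input, already established in the discussion preceding Theorem 2.1, is that every eigenvalue of $h_D$ has strictly negative imaginary part; hence $\|e^{-ith_D}\|\to0$ exponentially as $t\to+\infty$, so $g\in L^1(\mathbb{R})\cap L^2(\mathbb{R})$ and all integrals below converge absolutely. With this, the left-hand side of the proposition equals $\int_{-\infty}^\infty|g(t)|^2\,dt$, the support in $t\ge0$ being automatic from the definition of $g$.

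Then I would compute the Fourier transform explicitly. With the convention $\hat g(E)=\int_{-\infty}^\infty g(t)e^{iEt}\,dt$, I have $\hat g(E)=\langle e_1,\left(\int_0^\infty e^{it(E-h_D)}\,dt\right)e_N\rangle$. Because $E-h_D$ has all its eigenvalues in the open upper half plane, the operator integral converges; using $\int_0^\infty e^{itM}\,dt=iM^{-1}$ for $M$ with spectrum in the upper half plane and substituting $M=E-h_D$ gives $\int_0^\infty e^{it(E-h_D)}\,dt=-i(h_D-E)^{-1}$. Consequently $\hat g(E)=-i\langle e_1,(h_D-E)^{-1}e_N\rangle$, so that $|\hat g(E)|^2=\bigl|\langle e_1,(h_D-E)^{-1}e_N\rangle\bigr|^2$. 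Applying Plancherel's theorem in the form $\int_{-\infty}^\infty|g(t)|^2\,dt=\frac{1}{2\pi}\int_{-\infty}^\infty|\hat g(E)|^2\,dE$ then yields exactly the right-hand side.

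The only real obstacle is bookkeeping forced by the non-self-adjointness of $h_D$: one must justify the resolvent integral (handled entirely by the location of the eigenvalues in the lower half plane), confirm that $g$ genuinely lies in $L^2(\mathbb{R})$ so that Plancherel applies, and fix the Fourier normalization consistently so the prefactor comes out as $1/2\pi$ rather than $1/\sqrt{2\pi}$ or $1$. None of these steps is deep, but each should be stated carefully precisely because $h_D$ is not Hermitian and the naive spectral-theorem intuition does not apply directly.
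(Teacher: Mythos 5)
Your proof is correct, and it takes a genuinely different route from the paper's. The paper proceeds through Lemma 3.2, a general inversion formula for the Sylvester-type map $g_V(x)=Vx-xV^*$: for $V$ invertible with spectrum in the open lower half-plane, $g_V^{-1}(x)=\frac{i}{2\pi}\int_\mathbb{R}(E-V)^{-1}x(E-V^*)^{-1}\,dE$, which it proves by contour integration on the entries of $(E-V)^{-1}$ (writing each entry as a ratio of cofactor and characteristic polynomials, closing the contour in the upper half-plane, and computing $\int_\mathbb{R}(E-V)^{-1}dE=-i\pi I$); the proposition follows by taking $V=h_D$, $x=p_N$ and the $(1,1)$ matrix element. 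You instead stay in the time domain: starting from $-\langle e_1,l^{-1}(p_N)e_1\rangle=\int_0^\infty\langle e_1,T_t(p_N)e_1\rangle\,dt$ (the tail end of the proof of Theorem 2.2), you exploit the rank-one structure of $p_N$ to collapse the integrand to $|g(t)|^2$ with $g(t)=\langle e_1,e^{-ith_D}e_N\rangle$, identify $\hat g(E)=-i\langle e_1,(h_D-E)^{-1}e_N\rangle$ via the half-line integral $\int_0^\infty e^{itM}\,dt=iM^{-1}$ (convergent since $\sigma(h_D)$ lies strictly below the real axis, which also puts $g$ in $L^1\cap L^2$ --- one should note the possible polynomial prefactors from Jordan blocks of the non-normal $h_D$, but these do not spoil exponential decay), and conclude with Plancherel in the $\hat g(E)=\int g(t)e^{iEt}dt$ normalization, which correctly produces the $1/2\pi$. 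Both arguments consume exactly the same spectral input, $\mathrm{Im}\,\sigma(h_D)<0$, but they are Fourier-dual to one another: the paper works directly in the energy domain with residue-style estimates, while yours is the classical frequency-domain solution of the Lyapunov equation via Parseval. Your route is shorter and avoids the cofactor bookkeeping entirely, at the cost of proving nominally less --- the paper's Lemma 3.2 inverts $g_V$ on arbitrary $x$ --- though your computation polarizes to the general identity as well if one replaces $p_N$ by $|f\rangle\langle g|$, so nothing essential is lost.
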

Using this formula, we relate the current $\mathcal{J}_\beta(N)$ to transfer matrix. In addition, in case of dynamically defined potential, such as the Anderson model, the scaling of the asymptotic behavior is shown to be related with the Lyapunov exponent. 

Recall that in noiseless case
\[ la=-i(h_Da-ah^*_D). \]
As mentioned before, the imaginary part of every eigenvalue of $h_D$ is less than 0, and $h_D$ is invertible. Let us prepare a lemma.
\begin{lem}
For $V\in M_N(\mathbb{C})$, define a linear map $g_V\colon M_N(\mathbb{C})\to M_N(\mathbb{C})$ as
\[ g_V(x)=Vx-xV^*,\ x\in M_N .\]
Suppose that $V$ is invertible and the imaginary part of every eigenvalue is less than 0. Then, $g_V$ is invertible and 
\[ g_V^{-1}(x)=\frac{i}{2\pi}\int_\mathbb{R}(E-V)^{-1}x(E-V^*)^{-1}dE .\]
\end{lem}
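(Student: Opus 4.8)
The plan is to verify directly that the integral on the right-hand side is a right inverse of $g_V$, and then invoke finite dimensionality. Write $\Phi(x)=\frac{i}{2\pi}\int_\mathbb{R}(E-V)^{-1}x(E-V^*)^{-1}\,dE$ for the candidate inverse. First I would record two preliminaries. Since the eigenvalues of $V$ all have negative imaginary part, no real number is an eigenvalue of $V$ (nor of $V^*$, whose eigenvalues are the conjugates and hence have positive imaginary part), so the resolvents $(E-V)^{-1}$ and $(E-V^*)^{-1}$ are defined for every real $E$; expanding at infinity gives $(E-V)^{-1}=E^{-1}+E^{-2}V+O(E^{-3})$ and similarly for $V^*$, so the integrand is $O(|E|^{-2})$ and $\Phi(x)$ converges absolutely. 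For invertibility of $g_V$ I would view it as $L_V-R_{V^*}$ (left multiplication by $V$ minus right multiplication by $V^*$); these commute, so the spectrum of $g_V$ is contained in $\{\lambda_i-\overline{\lambda_j}\}$ where the $\lambda$'s are eigenvalues of $V$, and $\mathrm{Im}(\lambda_i-\overline{\lambda_j})=\mathrm{Im}\lambda_i+\mathrm{Im}\lambda_j<0$ shows $0\notin\mathrm{spec}(g_V)$. (Alternatively, once $g_V\circ\Phi=\mathrm{id}$ is established, surjectivity on a finite-dimensional space already forces invertibility and identifies $\Phi$ as the inverse.)

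The key computation is to apply $g_V$ under the integral. Using the resolvent identities $V(E-V)^{-1}=E(E-V)^{-1}-\mathrm{id}$ and $(E-V^*)^{-1}V^*=E(E-V^*)^{-1}-\mathrm{id}$, the two terms with the factor $E(E-V)^{-1}x(E-V^*)^{-1}$ cancel, leaving
\[ g_V(\Phi(x))=\frac{i}{2\pi}\int_\mathbb{R}\left[(E-V)^{-1}x-x(E-V^*)^{-1}\right]dE. \]
Note that although neither term is separately integrable, their difference is $E^{-2}g_V(x)+O(E^{-3})$ at infinity, so the combination is integrable; this cancellation is the point one must respect throughout.

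To evaluate this integral I would close the contour in the upper half-plane along $[-R,R]$ and the upper semicircle $C_R$, oriented counterclockwise. Because the eigenvalues of $V$ lie in the open lower half-plane, $(E-V)^{-1}$ is analytic inside the contour and contributes nothing; the only poles enclosed come from $x(E-V^*)^{-1}$, whose singularities are the eigenvalues of $V^*$ in the open upper half-plane. For $R$ large enough to enclose them all, the standard resolvent integral $\frac{1}{2\pi i}\oint(E-V^*)^{-1}\,dE=\mathrm{id}$ (the Riesz projection for the entire spectrum) gives $\oint x(E-V^*)^{-1}\,dE=2\pi i\,x$, while the $O(|E|^{-2})$ bound makes $\int_{C_R}\to0$. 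Hence $\int_\mathbb{R}[(E-V)^{-1}x-x(E-V^*)^{-1}]\,dE=-2\pi i\,x$, so $g_V(\Phi(x))=\frac{i}{2\pi}(-2\pi i\,x)=x$. Combined with the invertibility of $g_V$ this yields $\Phi=g_V^{-1}$.

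The main obstacle is the interplay between the divergence of the individual resolvent integrals and the convergence of their difference: one cannot split $\int[(E-V)^{-1}x-x(E-V^*)^{-1}]$ into two contour integrals, so the semicircle estimate and the residue bookkeeping must be carried out on the combined integrand, using that the two families of poles sit strictly on opposite sides of the real axis. Everything else is routine resolvent algebra and the standard $\frac{1}{2\pi i}\oint(E-M)^{-1}\,dE=\mathrm{id}$ identity.
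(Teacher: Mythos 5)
Your proof is correct, and while it shares the paper's overall skeleton (verify by resolvent algebra that the candidate integral composes with $g_V$ to the identity, then invoke finite dimensionality), the execution of the key step is genuinely different. The paper computes the composition in the other order, $h\circ g_V=\mathrm{id}$ with $h=\Phi$, which gives injectivity of $g_V$; more importantly, after the cancellation it \emph{splits} the remaining integral into the two pieces $\int_\mathbb{R}(E-V)^{-1}x\,dE$ and $\int_\mathbb{R}x(E-V^*)^{-1}\,dE$ and evaluates $\int_\mathbb{R}(E-V)^{-1}dE=-i\pi I$ entrywise, writing each entry of the resolvent as a ratio of cofactor and characteristic polynomials and running a semicircular contour on each rational function. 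Those separate integrals only exist as symmetric limits $\lim_{R\to\infty}\int_{-R}^{R}$, since the diagonal entries decay like $1/E$ --- a principal-value subtlety the paper handles only implicitly through its symmetric contour. You avoid this entirely: you respect the point (which you correctly flag) that only the \emph{combined} integrand $(E-V)^{-1}x-x(E-V^*)^{-1}=E^{-2}g_V(x)+O(|E|^{-3})$ is absolutely integrable, and you evaluate it in one stroke via the Riesz projection identity $\frac{1}{2\pi i}\oint(E-V^*)^{-1}dE=\mathrm{id}$, using that the poles of the two terms lie strictly on opposite sides of the real axis. What your route buys is a cleaner convergence justification and no determinant/cofactor asymptotics; what the paper's route buys is elementary self-containedness (explicit residue-free estimates on rational entries rather than the abstract full-spectrum Riesz projection). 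Your additional spectral argument --- $g_V=L_V-R_{V^*}$ with commuting factors, so $\sigma(g_V)\subseteq\{\lambda_i-\overline{\lambda_j}\}$ and $\mathrm{Im}(\lambda_i-\overline{\lambda_j})=\mathrm{Im}\lambda_i+\mathrm{Im}\lambda_j<0$ --- is a correct, independent proof of invertibility that the paper does not have (it derives injectivity from the one-sided inverse identity instead); as you note, either suffices once $g_V\circ\Phi=\mathrm{id}$ is in hand on a finite-dimensional space.
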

\begin{proof}
Since the integrand (operator) in the right hand side is continuous for $E$ and 
\[ \|(V-E)^{-1}\|\le \frac{1}{|E|-\|V\|} \]
for $E$ with large absolute value, the integral converges and defines a linear map on $M_N(\mathbb{C})$. Let us denote it by $h(x)$. Since
\[ g_V(x)=(V-E)x-x(V^*-E) \]
for $E\in\mathbb{R}$, we have
\[ h\circ g_V(x)=-\frac{i}{2\pi}\int_\mathbb{R}x(E-V^*)^{-1}dE+\frac{i}{2\pi}\int_\mathbb{R}(E-V)^{-1}xdE . \]
Let us consider the entry of the matrix
\[ \int_\mathbb{R}(E-V)^{-1}dE .\]

In general, $(i,j)$-entry of the inverse matrix of an $N\times N$ matrix $A=(a_{ij})_{i,j=1}^N$ is expressed as
\[ \mathrm{det}(A)^{-1}(-1)^{i+j}\mathrm{det}(A_{ji}). \]
$A_{ij}$ is an $(N-1)\times (N-1)$ matrix called factor matrix, which is made by removing the i-th row and the j-th column from $A$. $\mathrm{det}(E-V)$ is a polynomial that has degree of $N$ and the coefficient of $E^N$ is 1. Let us write $\mathrm{det}(E-V)=E^N+a_N(E)$. Set $A=E-V$, then $\mathrm{det}(A_{ii})$ is a polynomial with degree of $N-1$ and the coefficient of $E^{N-1}$ is 1. Let us write $\mathrm{det}(A_{ii})=E^{N-1}+b_N^i(E)$. If $i\neq j$, then $\mathrm{det}(A_{ij})$ is a polynomial with degree of $N-2$ and denoted by $c_N^{ij}(E)$. Define $\mathbb{C}_+=\{z\in\mathbb{C}\mid\mathrm{Im}z\ge0\}$. Since $\mathrm{det}(E-V)$ has no zeros in $\mathbb{C}_+$, $(E-V)^{-1}_{ij}$ is regular in a region containing $\mathbb{C}_+$ (note that $(E-V)^{-1}_{ij}$ is the $(i,j)$-entry of $(E-V)^{-1}$, not factor matrix). For $R>0$, define a cycle $\Gamma_R$ as $\{z\in\mathbb{C}\mid\mathrm{Im}z=0, \mathrm{Re}z\in[-R,R]\}\cup\{Re^{i\theta}\mid \theta\in[
0,\pi]\}$, then
\[ \oint_{\Gamma_R}(E-V)^{-1}_{ij}dE=0 .\]
Set $C_R=\{Re^{i\theta}\mid \theta\in[0,\pi]\}$. \\
(i)$i=j$\\
\begin{eqnarray*}
\int_{C_R}(z-V)^{-1}_{ii}dz&=&\int_{C_R}\frac{z^{N-1}+b^i_N(z)}{z^N+a_N(z)}dz \\
&=&\int_0^{\pi}\frac{R^{N-1}e^{i(N-1)\theta}+b_N^i(Re^{-i\theta})}{R^Ne^{iN\theta}+a_N(Re^{i\theta})}iRe^{i\theta}d\theta \\
&=&i\int_0^{\pi}\frac{R^N+e^{-i(N-1)\theta}b_N^i(Re^{-i\theta})}{R^N+e^{-iN\theta}a_N(Re^{i\theta})}d\theta.
\end{eqnarray*}
This converges to $i\pi$ as $R\to\infty$. By
\[ \int^R_{-R}(E-V)^{-1}_{ii}dE+\int_{C_R}(z-V)^{-1}_{ii}dz=\oint_{\Gamma_R}(E-V)^{-1}_{ij}dE=0, \]
we have
\[ \int_\mathbb{R}(E-V)^{-1}_{ii}dE=\lim_{R\to\infty}\int^R_{-R}(E-V)^{-1}_{ii}dE=-i\pi .\]
(ii)$i\neq j$\\
\begin{eqnarray*}
\int_{C_R}(z-V)^{-1}_{ij}dz&=&\int_{C_R}\frac{c_N^{ij}(z)}{z^N+a_N(z)}dz \\
&=&i\int_0^{\pi}\frac{Re^{-i(N-1)\theta}c_N^{ij}(Re^{i\theta})}{R^N+e^{-iN\theta}a_N(Re^{i\theta})}d\theta.
\end{eqnarray*}
This converges to 0 as $R\to\infty$. Thus,
\[ \int_\mathbb{R}(E-V)^{-1}_{ij}dE=0. \]
Summarizing the above calculations, we get
\[ \int_\mathbb{R}(E-V)^{-1}dE=-i\pi I \]
and
\[ h\circ g_V(x)=-\frac{i}{2\pi}i\pi x+\frac{i}{2\pi}(-i\pi)x=x. \]
This implies that $g_V$ is an injection. Since the space that $g_V$ operate is finite dimensional, $g_V$ is also surjective. Therefore, $g_V$ is invertible and
\[ g_V^{-1}(x)=h(x)=\frac{i}{2\pi}\int_\mathbb{R}(E-V)^{-1}x(E-V^*)^{-1}dE .\]
\end{proof}

Applying this lemma for $V=h_D$, then we obtain Proposition 3.1:
\begin{eqnarray*}
-\langle e_1,l^{-1}(p_N)e_1\rangle 
&=&\frac{1}{2\pi}\int_\mathbb{R}\langle e_1,(h_D-E)^{-1}p_N(h_D^*-E)^{-1}e_1\rangle dE\\
&=&\frac{1}{2\pi}\int_\mathbb{R}\left|\langle e_1,(h_D-E)^{-1}e_N\rangle\right|^2dE .
\end{eqnarray*}
By this equation, in order to know the asymptotic behavior of the current $\mathcal{J}_\beta(N)$, we have to investigate that of $\left|\langle e_1,(h_D-E)^{-1}e_N\rangle\right|^2$. As we will see in the following, $\left|\langle e_1,(h_D-E)^{-1}e_N\rangle\right|$ is related to transfer matrix.

Let us recall transfer matrix. Although we are considering a system on finite lattice $[1,N]\cap\mathbb{N}$, potential is given as a function $v\colon\mathbb{N}\to\mathbb{R}$ in order to take limit $N\to\infty$. For $E\in\mathbb{C}$, if $\psi\in\mathbb{C}^N$ satisfies
\[ h\psi=E\psi, \]
then the relation
\[ \left(
\begin{array}{c}
\psi(n+1) \\
\psi(n)
\end{array}\right) = 
\left(
\begin{array}{cc}
v(n)-E & -1 \\
1 & 0 
\end{array}\right)
\left(
\begin{array}{c}
\psi(n) \\
\psi(n-1)
\end{array}\right),\ n=1,\cdots,N \]
holds (here, $\psi(0)=\psi(N+1)=0$). A $2\times 2$ matrix
\[ T_N(E)\equiv\left(
\begin{array}{cc}
v(N)-E & -1 \\
1 & 0 
\end{array}\right)\cdots
\left(
\begin{array}{cc}
v(1)-E & -1 \\
1 & 0 
\end{array}\right) \]
is called a transfer matrix. It is in $SL(2,\mathbb{C})$ and thus $\|T_N(E)\|\ge1$.

For $E\in\mathbb{R}$, define
\[ g_{ij}(E)=\langle e_i,(h_D-E)^{-1}e_j\rangle .\]
These values are related to transfer matrix as follows.

\begin{lem}
\[ \tilde{T}_N(E)\left(
\begin{array}{cc}
g_{11}(E) & g_{1N}(E)  \\
1 & 0
\end{array}\right)=\left(
\begin{array}{cc}
0 & 1 \\
g_{N1}(E) & g_{NN}(E)
\end{array}\right) ,\]
where $\tilde{T}_N(E)$ is a transfer matrix corresponding to a complex-valued potential $\tilde{v}$ defined as $\tilde{v}(1)=v(1)-i(\alpha_{in}^l+\alpha_{out}^l)$, $\tilde{v}(N)=v(N)-i(\alpha_{in}^r+\alpha_{out}^r)$ and $\tilde{v}(n)=v(n)$ for $n=2,\cdots,N-1$.
\end{lem}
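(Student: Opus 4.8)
The plan is to recognize $h_D$ as a complex Jacobi (tridiagonal) operator with Dirichlet boundary conditions and to read off the matrix identity one column at a time from the defining equation of the resolvent. First I would observe that, since $p_1=|e_1\rangle\langle e_1|$ and $p_N=|e_N\rangle\langle e_N|$ modify only the diagonal entries at the two boundary sites, $h_D$ acts on $\psi\in\mathbb{C}^N$ as
\[ (h_D\psi)(n)=-\psi(n+1)-\psi(n-1)+\tilde v(n)\psi(n), \]
with the convention $\psi(0)=\psi(N+1)=0$ and $\tilde v$ the complex potential in the statement. Consequently $\tilde T_N(E)$ is exactly the matrix propagating solutions of the homogeneous recurrence $\psi(n+1)=(\tilde v(n)-E)\psi(n)-\psi(n-1)$ attached to $h_D$, i.e. it sends $(\psi(1),\psi(0))^T$ to $(\psi(N+1),\psi(N))^T$.

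Next I would extract the two relevant columns of the resolvent. Writing $\phi(n)=g_{n1}(E)$ and $\chi(n)=g_{nN}(E)$, the identities $(h_D-E)\phi=e_1$ and $(h_D-E)\chi=e_N$ say that each of $\phi,\chi$ obeys the homogeneous recurrence at every step except for a single inhomogeneous term, located at $n=1$ for $\phi$ and at $n=N$ for $\chi$, together with $\phi(0)=\phi(N+1)=\chi(0)=\chi(N+1)=0$.

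The crux is to absorb each boundary source into a fictitious boundary value so that the transfer matrix applies verbatim. For the first column I would introduce $\tilde\phi$ defined by the pure recurrence for all $n$ with $\tilde\phi(1)=\phi(1)=g_{11}(E)$ and $\tilde\phi(0)=1$; the source equation at $n=1$, namely $\phi(2)=(\tilde v(1)-E)\phi(1)-1$, is precisely $\tilde\phi(2)$, so $\tilde\phi\equiv\phi$ on $\{1,\dots,N+1\}$ and in particular $\tilde\phi(N)=g_{N1}(E)$, $\tilde\phi(N+1)=0$. Applying $\tilde T_N(E)$ to $(\tilde\phi(1),\tilde\phi(0))^T=(g_{11}(E),1)^T$ then yields $(0,g_{N1}(E))^T$, the first column of the claim. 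Symmetrically, $\chi$ already satisfies the pure recurrence for $n=1,\dots,N-1$ and has $\chi(0)=0$, so propagating from $(\chi(1),\chi(0))^T=(g_{1N}(E),0)^T$ reproduces $\chi$ up to $n=N$, while the source equation at $n=N$, $(\tilde v(N)-E)\chi(N)-\chi(N-1)=1$, forces the recurrence to output $1$ rather than $\chi(N+1)=0$ at the final step; hence $\tilde T_N(E)$ sends $(g_{1N}(E),0)^T$ to $(1,g_{NN}(E))^T$, the second column. Assembling the two columns gives the stated identity.

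I expect the only delicate point to be the bookkeeping of the two source terms: one must recognize that a unit source at a boundary site is equivalent to shifting the corresponding ghost value $\psi(0)$ or $\psi(N+1)$ by one, which is exactly what allows the homogeneous transfer matrix to reproduce the Green's-function data. Everything else is a routine finite-difference computation, requiring no spectral or convergence input beyond the invertibility of $h_D-E$ for real $E$ already used earlier.
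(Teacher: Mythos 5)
Your proof is correct; note that the paper itself gives no proof of this lemma, deferring to Lemma 2.2 of \cite{bruneau2013landauer}, and your argument is exactly the standard computation used there: since $h_D$ is the Jacobi matrix with complex boundary potential $\tilde v$, the resolvent columns $\phi(n)=g_{n1}(E)$ and $\chi(n)=g_{nN}(E)$ obey the three-term recurrence with a unit source at $n=1$ resp.\ $n=N$, and your ghost-value bookkeeping (setting $\tilde\phi(0)=1$ to absorb the source at $n=1$, and letting the source at $n=N$ turn the terminal output $\chi(N+1)=0$ into $1$) correctly yields the two column identities $\tilde T_N(E)\,(g_{11}(E),1)^T=(0,g_{N1}(E))^T$ and $\tilde T_N(E)\,(g_{1N}(E),0)^T=(1,g_{NN}(E))^T$, consistent with the paper's conventions for $h_D$ and $T_N(E)$. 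The one step worth stating explicitly in a write-up is the induction showing $\tilde\phi\equiv\phi$ on $\{1,\dots,N+1\}$ (with $\phi(N+1):=0$ from the homogeneous equation at $n=N$), but as you note this is routine and needs nothing beyond invertibility of $h_D-E$ for $E\in\mathbb{R}$.
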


We do not give the proof here, since it is in \cite{bruneau2013landauer} (Lemma 2.2). By this lemma, we can evaluate $\left|\langle e_1,(h_D-E)^{-1}e_N\rangle\right|$ using transfer matrix.

\begin{lem}
There is a constant $M>0$ independent of $E\in\mathbb{R},\ N\in\mathbb{N}$ such that
\[ |g_{ij}(E)|\le M \]
($i,j=1,N$). There is a constant $K$ such that
\[ \frac{1}{\|\tilde{T}_N(E)\|}\le\left|\langle e_1,(h_D-E)^{-1}e_N\rangle\right|\le\frac{K}{\|\tilde{T}_N(E)\|} .\]
\end{lem}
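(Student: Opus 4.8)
The plan is to establish the two assertions in sequence, using the first (the uniform bound $M$) as the main tool for the second. The uniform bound on the four boundary resolvent entries $g_{ij}(E)$, $i,j\in\{1,N\}$, will come from the dissipative structure of $h_D$, while the two-sided estimate on $\langle e_1,(h_D-E)^{-1}e_N\rangle=g_{1N}(E)$ will come from reading the entries of $\tilde T_N(E)$ off Lemma 3.3 and controlling them by $M$.

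For the uniform bound, fix $E\in\mathbb R$ and $j\in\{1,N\}$ and set $\phi=(h_D-E)^{-1}e_j$, so that $(h_D-E)\phi=e_j$. I would take the imaginary part of the scalar identity $\langle\phi,(h_D-E)\phi\rangle=\langle\phi,e_j\rangle=\overline{\phi(j)}$. As $h$ is self-adjoint and $E$ is real, the only contribution to the imaginary part of the left-hand side is $\langle\phi,(\mathrm{Im}\,h_D)\phi\rangle$ with $\mathrm{Im}\,h_D=-(\alpha_{in}^l+\alpha_{out}^l)p_1-(\alpha_{in}^r+\alpha_{out}^r)p_N$, which gives
\[ \mathrm{Im}\,\phi(j)=(\alpha_{in}^l+\alpha_{out}^l)|\phi(1)|^2+(\alpha_{in}^r+\alpha_{out}^r)|\phi(N)|^2. \]
Both terms on the right are nonnegative, so bounding the left-hand side by $|\phi(j)|$ gives $(\alpha_{in}^l+\alpha_{out}^l)|\phi(1)|^2\le|\phi(j)|$ and $(\alpha_{in}^r+\alpha_{out}^r)|\phi(N)|^2\le|\phi(j)|$ separately. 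Taking $j=1$ bounds $g_{11}=\phi(1)$ and then $g_{N1}=\phi(N)$; taking $j=N$ bounds $g_{NN}=\phi(N)$ and then $g_{1N}=\phi(1)$. Each bound involves only the strictly positive couplings $\alpha_{in}^l+\alpha_{out}^l$ and $\alpha_{in}^r+\alpha_{out}^r$, hence is independent of $E$ and $N$; I would take $M$ to be the maximum of the four.

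For the two-sided estimate, write $\tilde T_N(E)=\left(\begin{smallmatrix}a&b\\c&d\end{smallmatrix}\right)$ and compare columns in Lemma 3.3. The second column yields $a\,g_{1N}=1$ and $c\,g_{1N}=g_{NN}$, so $g_{1N}(E)\neq0$, $a=1/g_{1N}$ and $c=g_{NN}a$; the first column yields $b=-g_{11}a$ and $d=g_{N1}-g_{11}c$. The lower bound is then immediate, since every entry is dominated by the operator norm: $|a|\le\|\tilde T_N(E)\|$ together with $|a|\,|g_{1N}|=1$ gives $|g_{1N}|\ge 1/\|\tilde T_N(E)\|$. For the upper bound I would invoke the uniform bound: $|g_{1N}|\le M$ forces $|a|=1/|g_{1N}|\ge 1/M$, and then $|b|=|g_{11}|\,|a|\le M|a|$, $|c|=|g_{NN}|\,|a|\le M|a|$ and $|d|\le M+M^2|a|\le 2M^2|a|$. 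Hence $\|\tilde T_N(E)\|\le\sqrt{|a|^2+|b|^2+|c|^2+|d|^2}\le C|a|$ with $C$ depending only on $M$, and $|a|\,|g_{1N}|=1$ gives $|g_{1N}|\le C/\|\tilde T_N(E)\|$, which is the claim with $K=C$.

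The hard part is the upper bound. It cannot be obtained by the naive route of bounding $\bigl\|\tilde T_N(E)\left(\begin{smallmatrix}1\\0\end{smallmatrix}\right)\bigr\|$ below by a fixed multiple of $\|\tilde T_N(E)\|$, which fails for general matrices in $SL(2,\mathbb C)$. What rescues the argument is that Lemma 3.3 pins every entry of $\tilde T_N(E)$ to the four boundary resolvent entries, so the single uniform bound $M$ controls all of them at once and forces $\|\tilde T_N(E)\|$ to be comparable to $|a|=1/|g_{1N}|$. In other words, the dissipativity of $h_D$ — felt only through the strictly positive couplings at the two ends — is exactly what keeps $g_{1N}(E)$ from dropping below $1/\|\tilde T_N(E)\|$ up to a constant.
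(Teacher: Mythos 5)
Your proposal is correct and follows essentially the same route as the paper: the uniform bound $M$ comes from the same imaginary-part resolvent identity $\mathrm{Im}\,g_{jj}(E)=(\alpha_{in}^l+\alpha_{out}^l)|g_{1j}(E)|^2+(\alpha_{in}^r+\alpha_{out}^r)|g_{Nj}(E)|^2\le|g_{jj}(E)|$ for $j=1,N$, and the two-sided estimate from Lemma 3.3, where the paper's computation of $g_{1N}(E)\tilde{T}_N(E)$ as a product of two matrices with entries bounded by $M$ is exactly your entrywise solution $a=1/g_{1N}$, $b=-g_{11}a$, $c=g_{NN}a$, $d=g_{N1}-g_{11}g_{NN}a$ (with your lower bound $|a|\ge 1/M$ playing the role the paper's boundedness of entries plays implicitly). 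The only differences are cosmetic, e.g.\ your constants differ by harmless factors such as the paper's extra $\tfrac14$ in the bound on $|g_{N1}(E)|^2$.
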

\begin{proof}
By resolvent formula,
\begin{eqnarray*}
 &&(\alpha_{in}^l+\alpha_{out}^l)|g_{11}(E)|^2+(\alpha_{in}^r+\alpha_{out}^r)|g_{N1}(E)|^2\\
 &=&\left\langle e_1,(h_D^*-E)^{-1}\{(\alpha_{in}^l+\alpha_{out}^l)p_1+(\alpha_{in}^r+\alpha_{out}^r)p_N\}(h_D-E)^{-1}e_1\right\rangle \\
&=&\frac{1}{2i}(g_{11}(E)-\overline{g_{11}(E)}) \\
&\le&|g_{11}(E)| .
\end{eqnarray*}
From this inequality, we have
\begin{eqnarray} 
|g_{11}(E)|^2-\frac{1}{\alpha_{in}^l+\alpha_{out}^l}|g_{11}(E)|\le 0,\\ (\alpha_{in}^r+\alpha_{out}^r)|g_{N1}(E)|^2\le-(\alpha_{in}^l+\alpha_{out}^l)|g_{11}(E)|^2+|g_{11}(E)| . 
\end{eqnarray}
By inequality(3),
\[ |g_{11}(E)|\le\frac{1}{\alpha_{in}^l+\alpha_{out}^l} \]
and by inequality(4),
\[ |g_{N1}(E)|^2\le\frac{1}{4(\alpha_{in}^l+\alpha_{out}^l)(\alpha_{in}^r+\alpha_{out}^r)} .\]
Similarly, we get
\[ |g_{1N}(E)|^2\le\frac{1}{4(\alpha_{in}^l+\alpha_{out}^l)(\alpha_{in}^r+\alpha_{out}^r)},\ |g_{NN}(E)|\le\frac{1}{\alpha_{in}^r+\alpha_{out}^r}. \]
The former inequality of the lemma is obtained.

Operating both hand sides of the equation of Lemma 3.3 to a vector $\left(\begin{array}{c}0 \\ 1\end{array}\right)$, we obtain
\[ 1\le\left\|\left(\begin{array}{c}1 \\ g_{NN}(E)\end{array}\right)\right\|\le\|\tilde{T}_N(E)\||g_{1N}(E)| .\]
Since $g_{1N}(E)$ is not $0$,
\[ \left(
\begin{array}{cc}
g_{11}(E) & g_{1N}(E) \\
1 & 0
\end{array}\right) \]
is invertible and by Lemma 3.3 we have
\begin{eqnarray*}
 g_{1N}(E)\tilde{T}_N(E)&=&g_{1N}(E)\left(
\begin{array}{cc}
0 & 1  \\
g_{N1}(E) & g_{NN}(E)
\end{array}\right)\left(
\begin{array}{cc}
g_{11}(E) & g_{1N}(E) \\
1 & 0
\end{array}\right)^{-1} \\
&=&\left(
\begin{array}{cc}
0 & 1  \\
g_{N1}(E) & g_{NN}(E)
\end{array}\right)\left(
\begin{array}{cc}
0 & -g_{1N}(E) \\
-1 & g_{11}(E)
\end{array}\right).
\end{eqnarray*}
Since all the entries of the right hand side are bounded, the norm is also bounded by an $E,N$-independent constant $K$: 
\[ \|\tilde{T}_N(E)\|\le\frac{K}{|g_{1N}(E)|}. \]
\end{proof}

Easy calculation shows that there are $E,N$-independent constants $a,b>0$ such that
\[ a\|T_N(E)\|\le\|\tilde{T}_N(E)\|\le b\|T_N(E)\| .\]
Therefore, the asymptotic behavior of the current is determined by that of 
\begin{equation}
\int_\mathbb{R}\frac{1}{\|T_N(E)\|^2}dE .
\end{equation}

Denote $C=2+\displaystyle\sup_{n\in\mathbb{N}}|v(n)|$, then the spectrum of $h$, $\sigma(h)$, is contained in the interval $[-C,C]$. Set $R>C+1$. The following facts show that the integral over large energy decays so rapidly that we do not have to care when considering the asymptotic behavior. This is used when we consider concrete models later.

\begin{thm}
\[ \liminf_{N\to\infty}\left(-\frac{1}{N}\log\int^\infty_R\frac{dE}{\|T_N(E)\|^2}\right)\ge2\log(R-C)>0 .\]
It is same for
\[\int^{-R}_{-\infty}\frac{dE}{\|T_N(E)\|^2} .\]
\end{thm}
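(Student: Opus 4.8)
The plan is to reduce the statement to a uniform exponential lower bound on the transfer-matrix norm and then to control the resulting integral by an elementary power-law tail. Concretely, I would first establish a bound of the form $\|T_N(E)\| \ge (|E|-C+1)^N$ valid for every real $E$ with $|E|\ge R$. Granting this, since $|E|-C+1 \ge |E|-C$ and the integrand is positive,
\[ \int_R^\infty \frac{dE}{\|T_N(E)\|^2} \le \int_R^\infty (|E|-C+1)^{-2N}\,dE = \frac{(R-C+1)^{1-2N}}{2N-1}, \]
and since $-\tfrac1N\log(\cdot)$ reverses inequalities, taking $-\tfrac1N\log$ of both sides and letting $N\to\infty$ yields
\[ \liminf_{N\to\infty}\left(-\frac1N\log\int_R^\infty\frac{dE}{\|T_N(E)\|^2}\right)\ge 2\log(R-C+1)\ge 2\log(R-C), \]
which is positive because $R>C+1$ forces $R-C>1$. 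The term $\log(2N-1)/N$ and the $1/N$ correction coming from the exponent $1-2N$ both vanish in the limit, so only the factor $2$ survives.

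The core step is the lower bound on $\|T_N(E)\|$, which I would obtain by a monotonicity argument on the solution of the eigenvalue recursion rather than by manipulating the matrix product directly. Let $\psi$ solve $h\psi=E\psi$ with the boundary values $\psi(0)=0$ and $\psi(1)=1$, so that $\psi$ obeys $\psi(n+1)=(v(n)-E)\psi(n)-\psi(n-1)$ and, by the transfer-matrix convention, $T_N(E)(1,0)^{\mathrm t}=(\psi(N+1),\psi(N))^{\mathrm t}$. Using $|v(n)-E|\ge |E|-|v(n)|\ge |E|-(C-2)$ one has
\[ |\psi(n+1)|\ge (|E|-C+2)\,|\psi(n)|-|\psi(n-1)|. \]
I would then prove by induction that $|\psi(n)|$ is strictly increasing and that $|\psi(n+1)|\ge (|E|-C+1)|\psi(n)|$ for all $n\ge1$: the base case $\psi(1)=1$, $\psi(2)=v(1)-E$ gives $|\psi(2)|\ge |E|-C+2>|\psi(1)|$, and if $|\psi(n-1)|\le|\psi(n)|$ the displayed inequality yields $|\psi(n+1)|\ge(|E|-C+1)|\psi(n)|$, which also exceeds $|\psi(n)|$ since $|E|-C+1\ge R-C+1>1$, thereby propagating the hypothesis. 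Iterating gives $|\psi(N+1)|\ge (|E|-C+1)^N$, whence $\|T_N(E)\|\ge \|T_N(E)(1,0)^{\mathrm t}\|\ge|\psi(N+1)|\ge(|E|-C+1)^N$.

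Because this bound uses only $|E|\ge R$ together with the uniform bound $|v(n)|\le C-2$, it holds verbatim for $E\le -R$, so the identical estimate applies to $\int_{-\infty}^{-R}\|T_N(E)\|^{-2}\,dE$ and disposes of the second assertion. The only point requiring genuine care is the monotonicity induction, that is, verifying that the competing term $-\psi(n-1)$ never destroys the geometric growth; the hypothesis $R>C+1$ is precisely what keeps the effective ratio $|E|-C+1$ above $1$, so that the growth, once started, is self-sustaining. The remaining integral estimate and the limit computation are routine, and the fact that the argument actually produces $2\log(R-C+1)$ rather than merely $2\log(R-C)$ shows the stated bound holds with room to spare.
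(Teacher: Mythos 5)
Your proof is correct, but it takes a genuinely different and more elementary route than the paper. The paper works on the doubly infinite lattice: it shows the resolvent matrix elements decay, $|\langle e_0,(h_\mathbb{Z}-E)^{-1}e_n\rangle|\le(|E|-C)^{-(n+1)}$, via an iterated resolvent identity, builds from them a decaying solution $\alpha$ of the eigenvalue equation, pairs it with a second solution $\beta$, and uses the $SL(2,\mathbb{C})$ Wronskian identity $\alpha(n+1)\beta(n)-\alpha(n)\beta(n+1)=1$ to force $\beta$, and hence the transfer matrix, to grow, yielding $\|T_n(E)\|\ge\frac{1}{\sqrt2}\frac{(|E|-C)^n}{|E|+C}$. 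You instead bound $\|T_N(E)\|$ directly by a monotonicity induction on the recursion solution with data $(\psi(0),\psi(1))=(0,1)$: from $|\psi(n+1)|\ge(|E|-C+2)|\psi(n)|-|\psi(n-1)|$ and the inductively propagated inequality $|\psi(n-1)|\le|\psi(n)|$ you get the self-sustaining geometric growth $|\psi(n+1)|\ge(|E|-C+1)|\psi(n)|$, whence $\|T_N(E)\|\ge|\psi(N+1)|\ge(|E|-C+1)^N$; the induction is sound because $R>C+1$ keeps the ratio above $1$, and the base case $\psi(2)=v(1)-E$ checks out. Your approach avoids the infinite-volume operator, the resolvent bounds, and the Wronskian entirely, stays purely finite-dimensional, and even delivers the slightly sharper rate $2\log(R-C+1)\ge2\log(R-C)$ without the polynomial prefactor $(|E|+C)^2$ that the paper must absorb in its tail integral; what the paper's heavier construction buys is the explicit exponentially decaying solution at large $|E|$ (a Combes--Thomas-type statement), which is of independent interest but not needed for this theorem. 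One phrasing caveat: your $\psi$ should be described as a solution of the three-term recurrence $\psi(n+1)=(v(n)-E)\psi(n)-\psi(n-1)$ rather than of $h\psi=E\psi$, since a genuine eigenvector of the finite-lattice $h$ would also have to satisfy $\psi(N+1)=0$; since the transfer-matrix identity $T_N(E)(1,0)^{\mathrm t}=(\psi(N+1),\psi(N))^{\mathrm t}$ is exactly the recurrence, this does not affect the argument. The integral estimate, the limit computation, and the verbatim transfer to $E\le-R$ are all fine.
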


By this theorem, we immediately obtain the following corollary.
\begin{cor}
There is $R_0>0$ such that for all $R\ge R_0$, 
\[ \liminf_{N\to\infty}\left(-\frac{1}{N}\log\int^\infty_{-\infty}\frac{dE}{\|T_N(E)\|^2}\right)=\liminf_{N\to\infty}\left(-\frac{1}{N}\log\int^R_{-R}\frac{dE}{\|T_N(E)\|^2}\right) \]
holds.
\end{cor}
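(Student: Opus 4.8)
The plan is to compare the decay rate of the full integral with that of the truncated one by showing that the two tails $\int_R^\infty$ and $\int_{-\infty}^{-R}$ are exponentially negligible once $R$ is large. Write $I_R(N)=\int_{-R}^{R}\|T_N(E)\|^{-2}\,dE$ and $I_\infty(N)=\int_{-\infty}^{\infty}\|T_N(E)\|^{-2}\,dE$, and set $\gamma_R=\liminf_{N\to\infty}\big({-\tfrac1N}\log I_R(N)\big)$ and $\gamma_\infty=\liminf_{N\to\infty}\big({-\tfrac1N}\log I_\infty(N)\big)$. Since $I_\infty(N)\ge I_R(N)$, one has ${-\tfrac1N}\log I_\infty(N)\le{-\tfrac1N}\log I_R(N)$ for every $N$, and passing to $\liminf$ gives the easy inequality $\gamma_\infty\le\gamma_R$ for every $R$. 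The whole content of the corollary is the reverse inequality for large $R$.

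First I would record two structural facts about $\gamma_R$. Because the integrand is positive, $R\mapsto I_R(N)$ is nondecreasing, hence $R\mapsto\gamma_R$ is nonincreasing. Moreover, fixing any $R_1>C+1$, a crude product bound on the transfer matrix yields a constant $C_1=C_1(R_1)\ge1$ with $\|T_N(E)\|\le C_1^{\,N}$ uniformly for $E\in[-R_1,R_1]$ and all $N$; consequently $I_{R_1}(N)\ge 2R_1\,C_1^{-2N}$ and therefore $\gamma_{R_1}\le 2\log C_1<\infty$. By monotonicity $\gamma_R\le\gamma_{R_1}$ for every $R\ge R_1$, so $\gamma_R$ stays bounded as $R$ grows while $2\log(R-C)\to\infty$. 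I then choose $R_0\ge R_1$ so large that $2\log(R_0-C)>\gamma_{R_1}$; for every $R\ge R_0$ this gives
\[ 2\log(R-C)\ \ge\ 2\log(R_0-C)\ >\ \gamma_{R_1}\ \ge\ \gamma_R. \]

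For the reverse inequality fix such an $R\ge R_0$ and let $\epsilon>0$ be arbitrary. By the definition of $\liminf$ there is $N_0$ so that $I_R(N)<e^{-(\gamma_R-\epsilon)N}$ for all $N\ge N_0$, and by Theorem 3.5, applied to each of the two tails, we may also arrange $\int_R^\infty\|T_N(E)\|^{-2}\,dE<e^{-(2\log(R-C)-\epsilon)N}$ together with the same bound for $\int_{-\infty}^{-R}$. Adding the three pieces gives $I_\infty(N)\le e^{-(\gamma_R-\epsilon)N}+2e^{-(2\log(R-C)-\epsilon)N}$; since $2\log(R-C)-\epsilon>\gamma_R-\epsilon$ by the choice of $R_0$, the first term dominates for large $N$ and $I_\infty(N)\le 3e^{-(\gamma_R-\epsilon)N}$. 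Hence ${-\tfrac1N}\log I_\infty(N)\ge\gamma_R-\epsilon-\tfrac{\log 3}{N}$, and taking $\liminf$ yields $\gamma_\infty\ge\gamma_R-\epsilon$. Letting $\epsilon\downarrow0$ gives $\gamma_\infty\ge\gamma_R$, which together with the easy inequality proves $\gamma_\infty=\gamma_R$ for all $R\ge R_0$.

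The step I expect to require the most care is the bookkeeping in the last paragraph: Theorem 3.5 and the definition of $\gamma_R$ are both $\liminf$ statements, so I must first convert them into honest eventual exponential bounds valid simultaneously for all large $N$, and I must ensure the threshold $R_0$ is chosen uniformly in $R$ — which is exactly the role of the monotonicity and finiteness of $\gamma_R$ established in the second paragraph. Everything else reduces to the elementary fact that a sum of two exponentials decays at the slower of the two rates.
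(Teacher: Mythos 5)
Your proof is correct and is essentially the argument the paper has in mind: the paper states the corollary as an immediate consequence of Theorem 3.5, namely that the tails decay at rate at least $2\log(R-C)$, which dominates the rate of the truncated integral once $R$ is large. Your write-up simply supplies the bookkeeping the paper leaves implicit, and the one genuinely necessary ingredient you add explicitly --- the uniform finiteness of $\gamma_R$ via the crude bound $\|T_N(E)\|\le C_1^N$ on compact energy intervals, which is what makes the choice of $R_0$ with $2\log(R_0-C)>\gamma_{R_1}$ possible --- is handled correctly.
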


Let us give the proof of Theorem 3.5 step by step.

Here, let us consider a Schr\"{o}dinger operator $h_\mathbb{Z}$ on a doubly infinite lattice $\mathbb{Z}$. Now, potential is given only on $\mathbb{N}$. For $n=0,-1,\cdots$, we extend it by $v(n)=0$. Then, $h_\mathbb{Z}$ is a bounded self-adjoint operator on $l^2(\mathbb{Z})$ and $\sigma(h_\mathbb{Z})\subset[-C,C]$ ($C=2+\displaystyle\sup_{n\in\mathbb{N}}|v(n)|$). Thus, if $|E|\ge R$, $h_\mathbb{Z}-E$ is invertible. Note that there is a solution $\psi$ of the eigenvalue equation $h_\mathbb{Z}\psi=E\psi$ such that $\psi(n)=\langle e_0,(h-E)^{-1}e_n\rangle$ for $n=0,1,2,\cdots$. Such $\psi$ can be constructed as follows: If $n\in\mathbb{N}$
\begin{eqnarray*}
-\langle e_0,(h_\mathbb{Z}-E)^{-1}e_{n+1}\rangle-\langle e_0,(h_\mathbb{Z}-E)^{-1}e_{n-1}\rangle+v(n)\langle e_0,(h_\mathbb{Z}-E)^{-1}e_n\rangle&=&\langle e_0,(h_\mathbb{Z}-E)^{-1}h_\mathbb{Z}e_n\rangle \\&=&E\langle e_0,(h_\mathbb{Z}-E)^{-1}e_n\rangle 
\end{eqnarray*}
holds. For $n=-1,-2,\cdots$, determine $\psi(n)$ by
\[ \psi(n-1)=-\psi(n+1)+v(n)\psi(n)-E\psi(n) \]
inductively.

Let us consider the asymptotic behavior of $\langle e_0,(h_\mathbb{Z}-E)^{-1}e_n\rangle$. Set $q_n=-|e_n\rangle\langle e_{n-1}|-|e_{n-1}\rangle\langle e_n|$ and $h_n=h_\mathbb{Z}-q_n$. By resolvent formula
\[ (h_\mathbb{Z}-E)^{-1}=(h_n-E)^{-1}-(h_\mathbb{Z}-E)^{-1}q_n(h_n-E)^{-1}, \]
\[ \langle e_0,(h_\mathbb{Z}-E)^{-1}e_n\rangle=\langle e_n,(h_n-E)^{-1}e_n\rangle\langle e_0,(h_\mathbb{Z}-E)^{-1}e_{n-1}\rangle. \]
Use this equation for $\langle e_0,(h_\mathbb{Z}-E)^{-1}e_{n-1}\rangle$ again and repeat this process, then finally we get
\[ \langle e_0,(h_\mathbb{Z}-E)^{-1}e_n\rangle=\langle e_0,(h_\mathbb{Z}-E)^{-1}e_0\rangle\prod^n_{k=1}\langle e_k,(h_k-E)^{-1}e_k\rangle. \]
By spectral decomposition and the condition on $E$, the absolute value of each factor is bounded by $\frac{1}{|E|-C}$. Thus, we have
\[ |\langle e_0,(h_\mathbb{Z}-E)^{-1}e_n\rangle|\le\left(\frac{1}{|E|-C}\right)^{n+1}. \]
Define $\alpha(n)=\psi(n)/\sqrt{\psi(0)^2+\psi(1)^2}$. Let $\beta(n)$ be the solution of the eigenvalue equation with the condition $\beta(0)=-\overline{\alpha(1)},\ \beta(1)=\overline{\alpha(0)}$ ($|\beta(0)|^2+|\beta(1)|^2=1$). By the property of transfer matrix,
\[ \left(\begin{array}{cc}
\alpha(n+1) & \beta(n+1)  \\
\alpha(n) & \beta(n)
\end{array}\right)=T_n(E)\left(\begin{array}{cc}
\alpha(1) & \beta(1)  \\
\alpha(0) & \beta(0)
\end{array}\right) .\]
Since $T_n(E)\in SL(2,\mathbb{C})$ and $\alpha(1)\beta(0)-\alpha(0)\beta(1)=1$, $\alpha(n+1)\beta(n)-\alpha(n)\beta(n+1)=1$ holds. Thus we have
\begin{eqnarray*}
1&\le&|\alpha(n+1)\beta(n)|+|\alpha(n)\beta(n+1)| \\
&\le &\frac{1}{\sqrt{|\psi(0)|^2+|\psi(1)|^2}}\left[\left(\frac{1}{|E|-C}\right)^{n+1}|\beta(n)|+\left(\frac{1}{|E|-C}\right)^n|\beta(n+1)|\right] \\
&\le&\frac{1}{\sqrt{|\psi(0)|^2+|\psi(1)|^2}}\left(\frac{1}{|E|-C}\right)^{n}(|\beta(n)|+|\beta(n+1)|) .
\end{eqnarray*}
By
\[ |\psi(0)|=|\langle e_0,(h-E)^{-1}e_0\rangle|\ge\frac{1}{|E|+C} \]
we get
\[ |\beta(n)|+|\beta(n+1)|\ge\frac{(|E|-C)^n}{|E|+C}. \]
By 
\[ |\beta(n)|^2+|\beta(n+1)|^2\ge\frac{(|\beta(n)|+|\beta(n+1)|)^2}{2}\ge\frac{1}{2}\frac{(|E|-C)^{2n}}{(|E|+C)^2} \]
and $|\beta(0)|^2+|\beta(1)|^2=1$, we get
\[ \|T_n(E)\|\ge\frac{1}{\sqrt{2}}\frac{(|E|-C)^n}{|E|+C}. \]
\[ \int^\infty_R\frac{dE}{\|T_n(E)\|^2}\le2\int^\infty_R\frac{(|E|+C)^2}{(|E|-C)^{2n}}dE=\frac{2}{2n-1}\frac{1}{(R-C)^{2n-1}}+\frac{8C}{2n}\frac{1}{(R-C)^{2n}}+\frac{8C^2}{2n+1}\frac{1}{(R-C)^{2n+1}}. \]
Thus, Theorem 3.5 follows (the case of $\int^{-R}_{-\infty}$ is similarly proven). $\Box$

By this theorem, it turns out that Theorem 1.1 in \cite{bruneau2016conductance} is also true in our setting. We state as a theorem here.
\begin{thm}[\cite{bruneau2016conductance}]
Let $h_\mathbb{N}$ be a discrete Schr\"{o}dinger operator on $l^2(\mathbb{N})$ with a bounded potential $v\colon\mathbb{N}\to\mathbb{R}$. The following statements are equivalent.
\begin{itemize}
    \item $h_\mathbb{N}$ does not have absolutely continuous spectrum ($\sigma_{ac}(h_\mathbb{N})=\emptyset$)
    \item $\displaystyle\lim_{N\to\infty}\int_\mathbb{R}\frac{dE}{\|T_N(E)\|^2}=0$.
\end{itemize}
\end{thm}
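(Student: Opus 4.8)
\emph{Proof plan.} The statement is the transfer-matrix form of the transport/spectrum dichotomy, and is exactly Theorem 1.1 of \cite{bruneau2016conductance}; the plan is to prove the equivalence by giving the integrand $\|T_N(E)\|^{-2}$ its spectral meaning through subordinacy theory (Gilbert--Pearson, Last--Simon), while using the scattering-theoretic convergence of the Landauer conductance in \cite{bruneau2016conductance} as the rigorous engine that produces an honest $N\to\infty$ limit. First I would record the reductions already in hand. By Corollary 3.6 the integral $\int_\mathbb{R}\|T_N(E)\|^{-2}\,dE$ may be replaced by $\int_{-R}^{R}\|T_N(E)\|^{-2}\,dE$ for $R$ large, so every limit is taken over a fixed compact window on which the integrand lies in $[0,1]$ (recall $\|T_N(E)\|\ge 1$); the energy tail is killed by Theorem 3.5. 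I would also keep in mind the identification, via Proposition 3.1 and Lemma 3.4, of this integral with the noiseless current of our model up to positive constants, which is what links the purely spectral statement to the conduction problem.

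For the direction $\sigma_{ac}(h_\mathbb{N})\neq\emptyset\ \Rightarrow\ \lim_N\int\|T_N(E)\|^{-2}\,dE\neq0$, the qualitative reason is the Last--Simon theorem: the set $\{E:\ \liminf_{L}\tfrac{1}{L}\sum_{n=1}^{L}\|T_n(E)\|^{2}<\infty\}$ is an essential support of the absolutely continuous part of the spectral measure of $h_\mathbb{N}$. Hence if $\sigma_{ac}\neq\emptyset$ there is $M$ and a positive-measure set $S_M$ on which, along $E$-dependent scales, a positive density of the $\|T_n(E)\|^{2}$ stay below $2M$, so $\|T_n(E)\|^{-2}$ is bounded below there. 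To convert this into a lower bound for the honest limit I would invoke the scattering statement from \cite{bruneau2016conductance}: the finite-sample transmission satisfies $\mathcal T_N(E)\to|t_\infty(E)|^{2}$ with $|t_\infty|^{2}>0$ a.e. on $\sigma_{ac}$, whence $\lim_N\int\|T_N(E)\|^{-2}\,dE\gtrsim\int_{\sigma_{ac}}|t_\infty(E)|^{2}\,dE>0$.

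For the converse $\sigma_{ac}(h_\mathbb{N})=\emptyset\ \Rightarrow\ \lim_N\int\|T_N(E)\|^{-2}\,dE=0$ I would use the same scattering convergence in the opposite direction: $\sigma_{ac}=\emptyset$ forces $|t_\infty(E)|^{2}=0$ for Lebesgue-a.e.\ $E$, so $\mathcal T_N(E)\to0$ a.e.\ on $[-R,R]$; since $\mathcal T_N$ is dominated by an integrable constant there, bounded convergence gives vanishing of the band contribution, and Theorem 3.5 disposes of the tail. The point of routing through $\mathcal T_N$ rather than arguing pointwise with $\|T_N(E)\|^{-2}$ directly is that the latter is genuinely more delicate, as explained next.

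The hard part is exactly the mismatch between the single-scale quantity $\|T_N(E)\|^{-2}$ appearing in our current formula and the Ces\`aro-averaged growth $\tfrac1L\sum_{n\le L}\|T_n(E)\|^{2}$ that subordinacy theory controls. Naive bounded convergence fails for the converse, because $\sigma_{ac}=\emptyset$ only yields divergence of these averages, which does \emph{not} force $\|T_N(E)\|^{-2}\to0$ for a fixed sequence of $N$ (the norms may dip back to $O(1)$ along $E$-dependent subsequences); and an elementary Fatou argument cannot promote a pointwise $\limsup$ bound into a lower bound on $\limsup_N\int$. The clean way to bridge this gap is the convergence of the transmission coefficients (equivalently, of the Landauer conductance) established by trace-class scattering methods in \cite{bruneau2016conductance}, which guarantees that $\int\|T_N(E)\|^{-2}\,dE$ has a genuine limit carried by the absolutely continuous spectrum. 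Combining that convergence with the resolvent/transfer-matrix dictionary of Lemma 3.4, the energy-tail estimate of Theorem 3.5, and the reduction of Corollary 3.6 then yields the stated equivalence.
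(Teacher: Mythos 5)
Your top-level architecture coincides with what the paper actually does, because the paper does not prove this theorem at all: it quotes Theorem 1.1 of \cite{bruneau2016conductance} and observes that, by Theorem 3.5 and Corollary 3.6, the tails $\int_{|E|\ge R}\|T_N(E)\|^{-2}\,dE$ vanish (exponentially fast) as $N\to\infty$, while $\sigma(h_\mathbb{N})\subset[-C,C]$ makes $\sigma_{ac}(h_\mathbb{N})=\emptyset$ equivalent to $\sigma_{ac}(h_\mathbb{N})\cap(-R,R)=\emptyset$, so the compact-window equivalence of \cite{bruneau2016conductance} transfers to the integral over all of $\mathbb{R}$. Your reduction to a fixed window $[-R,R]$ with integrand in $[0,1]$ is exactly this step and is correct.

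The genuine gap is in the ``rigorous engine'' you then lean on: it does not exist in the form you use it. \cite{bruneau2016conductance} does not establish pointwise a.e.\ convergence $\mathcal{T}_N(E)\to|t_\infty(E)|^2$ of finite-sample transmission coefficients, nor that $\int\|T_N(E)\|^{-2}\,dE$ ``has a genuine limit'' when absolutely continuous spectrum is present. Their theorem is precisely the equivalence of $\sigma_{ac}=\emptyset$ with the statement that the limit \emph{exists and equals zero}; when ac spectrum is present it yields only $\limsup_N\int>0$, and no pointwise limit of $\|T_N(E)\|$ need exist --- the norms can return to $O(1)$ along $E$-dependent scales, which is the very phenomenon you flag in your ``hard part'' paragraph. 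Consequently your converse direction ($\sigma_{ac}=\emptyset$ forces $|t_\infty|^2=0$ a.e., hence $\mathcal{T}_N\to0$ a.e.\ plus bounded convergence) rests on an object $t_\infty$ and a convergence statement that are unavailable: the published proof of that direction proceeds via value-distribution arguments for boundary values of Weyl $m$-functions (Breimesser--Pearson type), combined with Last--Simon, tying $\|T_N(E)\|^{-2}$ to spectral data of truncated operators --- not via convergent scattering data. The forward direction has the same defect: after correctly noting that Fatou cannot upgrade the Last--Simon bound (whose good scales are $E$-dependent) to a lower bound on the integral, you close the gap by appealing to the same nonexistent convergence. Note also that you do not need a limit there: the second bullet already fails once $\limsup_N\int>0$, so a $\limsup$ lower bound suffices in the ac direction. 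If, like the paper, you cite the compact-window theorem of \cite{bruneau2016conductance} as a black box and add the tail estimate, the argument closes; as a self-contained proof, both directions break at exactly the bridging step you identified as hard.
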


\subsubsection{Dynamically defined potentials}
The above results can be applied to arbitrary (bounded) potentials. Next we investigate the detail for a class of potentials called dynamically defined potentials. This class contains various physically important models such as the Anderson model, which is an example of random systems, and the Fibonacci Hamiltonian, which is considered as the one-dimensional model of a quasi-crystal.  There are a huge number of studies for the spectrum of Sch\"{o}dinger operators with dynamically defined potentials \cite{damanik2015spectral,damanik2017schrodinger}. Here, the scaling of the asymptotic behavior is shown to be related with the Lyapunov exponent. 

Let us start with the definition of dynamically defined potentials. We deal with the system on $\mathbb{Z}$, although we are interested in the half of it, $\mathbb{N}$. 

Let $(\Omega,\mathcal{F},P,\phi)$ be an ergodic invertible discrete dynamical system. That is, $(\Omega,\mathcal{F},P)$ is a probability space (in the sequel, we do not write the $\sigma$-field $\mathcal{F}$), $\phi\colon\Omega\to\Omega$ is a measurable bijection preserving probability $P$ such that the probability of invariant set is $0$ or $1$ (ergodicity). Let $f$ be a bounded real measurable function on $\Omega$. Then, for $\omega\in\Omega$ we have a Schr\"{o}dinger operator $h_\omega$ with a potential
\[ v_\omega(n)=f(\phi^n\omega),\ n\in\mathbb{Z}. \]
This $v_\omega(\cdot)$ is called a dynamically defined potential and a family of operators $\{h_\omega\}_{\omega\in\Omega}$ is called an ergodic Schr\"{o}dinger operator.

Let us denote $T_{N,\omega}(E)$ the transfer matrix determined by the potential $v_\omega$. Then $T_{N,\omega}(E)$ satisfies
\[ T_{N+M,\omega}(E)=T_{N,\phi^M\omega}(E)T_{M,\omega}(E) \]
and
\[ \log\|T_{N+M,\omega}(E)\|\le\log\|T_{N,\phi^M\omega}(E)\|+\log\|T_{M,\omega}(E)\| .\]
By subadditive ergodic theorem, for a.e. $\omega$
\[ \lim_{N\to\infty}\frac{1}{N}\log\|T_{N,\omega}\|=L(E) \]
holds, where
\[ L(E)\equiv\inf_{N\ge1}\frac{1}{N}\int_\Omega\log\|T_{N,\omega}(E)\|dP(\omega)=\lim_{N\to\infty}\frac{1}{N}\int_\Omega\log\|T_{N,\omega}(E)\|dP(\omega) .\]
$L(E)$ is called Lyapunov exponent. Since $\|T_{N,\omega}(E)\|\ge1$, $L(E)\ge0$. The Lyapunov exponent $L(E)$ provides a rate of exponential growth of the norm of the transfer matrix $\|T_{N,\omega}(E)\|$ for each $E\in\mathbb{R}$. What we would like to estimate is the integral
\[ \mathcal{I}(N,\omega)\equiv\int^\infty_{-\infty}\frac{dE}{\|T_{N,\omega}(E)\|^2} .\]

\begin{thm}
Assume that the Lyapunov exponent $L(E)$ is continuous. Then,
\begin{eqnarray*}
0&\le&\liminf_{N\to\infty}\left(-\frac{1}{N}\log\mathcal{I}(N,\omega)\right) \\
&\le&\limsup_{N\to\infty}\left(-\frac{1}{N}\log\mathcal{I}(N,\omega)\right) \\
&\le& 2\min_{E\in\mathbb{R}}L(E)
\end{eqnarray*}
holds for a.e. $\omega\in\Omega$.
\end{thm}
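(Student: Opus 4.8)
The plan is to treat the three inequalities separately. The middle one, $\liminf\le\limsup$, is automatic. For the first inequality I would show that $\mathcal{I}(N,\omega)$ is bounded above uniformly in $N$: splitting $\mathbb{R}=[-R,R]\cup\{|E|>R\}$ with $R>C+1$ and $C=2+\|f\|_\infty$ (which is $\omega$-independent since $f$ is bounded), the inner part contributes at most $2R$ because $\|T_{N,\omega}(E)\|\ge1$, while the tail is controlled by Theorem 3.5 applied to each $\omega$ (equivalently by Corollary 3.6). Hence $-\frac{1}{N}\log\mathcal{I}(N,\omega)\ge-\frac{1}{N}\log(2R+o(1))\to0$, giving $\liminf_{N}\bigl(-\tfrac1N\log\mathcal{I}(N,\omega)\bigr)\ge0$ for every $\omega$.

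The substance is the third inequality, equivalently a matching subexponential lower bound for $\mathcal{I}(N,\omega)$. Write $S_N(E,\omega)=\frac1N\log\|T_{N,\omega}(E)\|$, so that the integrand is $e^{-2NS_N(E,\omega)}$, and set $L_{\min}=\min_{E}L(E)$. The minimum is attained: $L$ is continuous by hypothesis and $L(E)\ge\log(|E|-C)\to\infty$ as $|E|\to\infty$ (from the norm lower bound established in the proof of Theorem 3.5), so the infimum is reached at some $E_0$. Fix $\epsilon>0$ and, using continuity, choose $\delta>0$ with $L(E)<L_{\min}+\epsilon$ on $I_\delta=[E_0-\delta,E_0+\delta]$. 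The idea is that this one short energy window around $E_0$ already forces $\mathcal{I}$ to be at least of order $e^{-2N(L_{\min}+O(\epsilon))}$.

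To exploit this I would pass from the per-energy convergence furnished by the subadditive ergodic theorem to a statement valid for a.e.\ $\omega$ and a.e.\ $E$ at once. For each fixed $E$ the set of $\omega$ with $S_N(E,\omega)\to L(E)$ has full $P$-measure; since $(\omega,E)\mapsto S_N(E,\omega)$ is jointly measurable, Fubini's theorem shows that for $P$-a.e.\ $\omega$ one has $S_N(E,\omega)\to L(E)$ for Lebesgue-a.e.\ $E$. Fix such an $\omega$. On the finite-measure interval $I_\delta$, Egorov's theorem then yields a measurable $A\subset I_\delta$ with $|A|>0$ on which $S_N(\cdot,\omega)\to L$ uniformly, so that $S_N(E,\omega)<L_{\min}+2\epsilon$ for all $E\in A$ once $N\ge N_0(\omega,\epsilon)$. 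Therefore
\[ \mathcal{I}(N,\omega)\ge\int_A e^{-2NS_N(E,\omega)}\,dE\ge|A|\,e^{-2N(L_{\min}+2\epsilon)}, \]
whence $\limsup_N\bigl(-\tfrac1N\log\mathcal{I}(N,\omega)\bigr)\le 2(L_{\min}+2\epsilon)$. The full-measure set of $\omega$ coming from the Fubini step does not depend on $\epsilon$, so letting $\epsilon\downarrow0$ for that fixed $\omega$ gives the bound for a.e.\ $\omega$.

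The main obstacle is precisely the coupling between the $\omega$-exceptional sets and the $E$-exceptional sets: the subadditive ergodic theorem only yields convergence off an $E$-dependent $P$-null set, and a naive union over all $E$ would be uncountable. Fubini is what renders this harmless, collapsing the family of $\omega$-null sets into a single joint null set. The second delicate point is that pointwise a.e.\ convergence of $S_N$ cannot by itself lower-bound the exponential integral at the single optimal energy (a point carries no Lebesgue mass); Egorov's theorem repairs this by producing a positive-measure set on which the convergence is uniform, so that the window around $E_0$ contributes the full factor $e^{-2N(L_{\min}+O(\epsilon))}$. I would also verify carefully that the uniform-in-$\omega$ bound $C=2+\|f\|_\infty$ makes both Theorem 3.5 and the divergence $L(E)\to\infty$ hold with $\omega$-independent constants, which is what legitimizes choosing $R$ (and the attainment of the minimum) independently of $\omega$.
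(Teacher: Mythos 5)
Your proposal is correct, and it shares the paper's overall skeleton --- the Fubini argument to collapse the $E$-dependent null sets into a single full-measure set of $\omega$, the coercivity $L(E)\ge\log(|E|-C)$ from the proof of Theorem 3.5 to get attainment of the minimum, and the continuity of $L$ to isolate a window $I_\delta$ around $E_{\min}$ --- but it diverges at the decisive analytic step. The paper converts the pointwise a.e.\ convergence $\frac1N\log\|T_{N,\omega}(E)\|\to L(E)$ into control of the integral via Jensen's inequality: since $-\log$ is convex and decreasing,
\[ -\frac1N\log\Bigl(\frac{1}{2\delta}\int_{R_\delta}\frac{dE}{\|T_{N,\omega}(E)\|^{2}}\Bigr)\le\frac{1}{\delta}\int_{R_\delta}\frac1N\log\|T_{N,\omega}(E)\|\,dE, \]
and then applies dominated convergence to pass to $\frac{1}{\delta}\int_{R_\delta}L(E)\,dE\le2\gamma+\epsilon$. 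You instead manufacture uniformity via Egorov's theorem on the finite-measure window, obtaining a set $A\subset I_\delta$ of positive Lebesgue measure on which $S_N(\cdot,\omega)<L_{\min}+2\epsilon$ for $N\ge N_0$, and lower-bound $\mathcal{I}(N,\omega)\ge|A|e^{-2N(L_{\min}+2\epsilon)}$ directly. The two devices do the same job: pointwise a.e.\ convergence at the single optimal energy carries no Lebesgue mass, and both Jensen+DCT and Egorov are ways of spreading the estimate over a set of positive measure. What each buys: the paper's route is shorter and needs no exceptional-set bookkeeping, but it silently requires an integrable dominating function for $\frac1N\log\|T_{N,\omega}(E)\|$ on $R_\delta$ (available since $\|T_{N,\omega}(E)\|\le(2+\|f\|_\infty+|E|)^N$, though the paper does not spell this out); your Egorov route avoids both convexity and domination entirely, at the modest cost of the extra $\epsilon$ and the set $A$, and it yields an explicit subexponential lower bound on $\mathcal{I}(N,\omega)$ rather than only on its logarithmic rate. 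Your handling of the remaining points is also sound and matches the paper where it is merely implicit: the first inequality follows from $\|T_{N,\omega}(E)\|\ge1$ on $[-R,R]$ plus the $\omega$-uniform tail bound of Theorem 3.5 (uniform because $C=2+\sup|f|$ does not depend on $\omega$), and you correctly observe that the Fubini full-measure set is $\epsilon$-independent, which legitimizes taking $\epsilon\downarrow0$ for a fixed $\omega$ --- a point the paper's proof also relies on without comment.
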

\begin{proof}
Only the last inequality is not trivial. Suppose that $\omega\in\Omega$ satisfies
\[ \lim_{N\to\infty}\frac{1}{N}\log\|T_{N,\omega}(E)\|=L(E) \]
for a.e. $E\in\mathbb{R}$. By Fubini theorem, the probability of the set of such $\omega$ is 1. By the discussion of Theorem 3.5 it turns out that $\displaystyle\inf_{E\in\mathbb{R}}L(E)=\min_{E\in\mathbb{R}}L(E)$. Put $\gamma=\displaystyle\min_{E\in\mathbb{R}}L(E)$ and let $E_{min}$ be the energy that achieves the minimum (such $E_{min}$ may not be uniquely determined, but the choice of $E_{min}$ is not important in the following discussion). Since $L(E)$ is continuous, for any $\epsilon>0$ there is $\delta>0$ such that $E\in(E_{min}-\delta,E_{min}+\delta)\equiv R_\delta\Rightarrow L(E)-\gamma < \frac{\epsilon}{2}.$ As $-\log$ is a monotonically decreasing convex function, we have
\begin{eqnarray*}
-\frac{1}{N}\log\mathcal{I}(N,\omega)&\le&-\frac{1}{N}\log\left(\int_{R_\delta}\frac{1}{\|T_{N,\omega}(E)\|^2}dE\right)\\
&=&-\frac{1}{N}\log\left(\frac{1}{2\delta}\int_{R_\delta}\frac{1}{\|T_{N,\omega}(E)\|^2}dE\right)-\frac{1}{N}\log2\delta\\
&\le&\frac{2}{2\delta}\int_{R_\delta}\frac{1}{N}\log\|T_{N,\omega}(E)\|dE-\frac{1}{N}\log2\delta.
\end{eqnarray*}
By dominated convergence theorem,
\[ \limsup_{N\to\infty}\left(-\frac{1}{N}\log\mathcal{I}(N,\omega)\right)\le\frac{2}{2\delta}\int_{R_\delta}L(E)dE\le2\gamma+\epsilon. \]
Since $\epsilon>0$ is arbitrary, we get
\[ \limsup_{N\to\infty}\left(-\frac{1}{N}\log\mathcal{I}(N,\omega)\right)\le2\min_{E\in\mathbb{R}}L(E). \]
\end{proof}
By this theorem, if the Lyapunov exponent $L(E)$ is continuous and $\displaystyle\min_{E\in\mathbb{R}}L(E)=0$, the current does not decay exponentially. Examples are given in the last of this section. Although this theorem tells when the decay of the current is slow, it does not tell when the current decays exponentially. We do not know whether the equality holds or not in Theorem 3.8. If the following large deviation type estimate and $\displaystyle\inf_{E\in\mathbb{R}}L(E)>0$ are given, we can conclude the exponential decay of the current.

\begin{dfn}[Large Deviation type estimate]
We say that the property LD (Large Deviation type estimate) holds, if the following condition is satisfied: 
For any $\epsilon>0$ and any finite closed interval $[a,b]$, there are constants $C,\eta>0$ such that 
\[ P\left(\left\{\omega\in\Omega\mid\left|\frac{1}{N}\log\|T_{N,\omega}(E)\|-L(E)\right|\ge\epsilon \right\}\right)\le Ce^{-\eta N},\ ^\forall N\in\mathbb{N},^\forall E\in[a,b]. \]
\end{dfn}

\begin{thm}
Suppose that the property LD holds and $\displaystyle\inf_{E\in\mathbb{R}}L(E)>0$, then
\[ \liminf_{N\to\infty}\left(-\frac{1}{N}\log\mathcal{I}(N,\omega)\right)>0,\ a.e.\ \omega\in\Omega.\]
\end{thm}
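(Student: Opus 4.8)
The plan is to combine the energy truncation of Corollary 3.6 with a first--moment estimate, passing from the mean to almost sure statements via Markov's inequality and the Borel--Cantelli lemma; this deliberately avoids any attempt to bound $\|T_{N,\omega}(E)\|$ from below uniformly in $E$ pathwise. First I would fix $R\ge R_0$ as in Corollary 3.6. Because $|v_\omega(n)|=|f(\phi^n\omega)|\le\|f\|_\infty$ for every $\omega$, the constant $C=2+\|f\|_\infty$ bounds $\sigma(h_\omega)$ uniformly in $\omega$, so the estimate of Theorem 3.5 and hence Corollary 3.6 hold with constants that may be chosen independently of $\omega$. It then suffices to prove that the truncated integral $\mathcal{I}_R(N,\omega)\equiv\int_{-R}^{R}\|T_{N,\omega}(E)\|^{-2}\,dE$ decays exponentially in $N$ for a.e.\ $\omega$.

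The core step is to estimate the expectation $\int_\Omega\mathcal{I}_R(N,\omega)\,dP(\omega)$. Put $\gamma_0=\inf_{E}L(E)>0$, fix $\epsilon=\gamma_0/2$, and let $C,\eta>0$ be the LD constants associated with $\epsilon$ and the interval $[-R,R]$. For each fixed $E$, I split the integral over $\Omega$ according to the good event $\{\tfrac1N\log\|T_{N,\omega}(E)\|\ge L(E)-\epsilon\}$ and its complement. On the good event $\|T_{N,\omega}(E)\|\ge e^{N(L(E)-\epsilon)}\ge e^{N(\gamma_0-\epsilon)}$, so the integrand is at most $e^{-2N(\gamma_0-\epsilon)}$; on the bad event the integrand is at most $1$ since $\|T_{N,\omega}(E)\|\ge1$, while the property LD bounds its probability by $Ce^{-\eta N}$. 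Hence, \emph{uniformly} in $E\in[-R,R]$,
\[ \int_\Omega\|T_{N,\omega}(E)\|^{-2}\,dP(\omega)\le e^{-2N(\gamma_0-\epsilon)}+Ce^{-\eta N}. \]
Integrating over $[-R,R]$, which is legitimate by Tonelli's theorem as the integrand is nonnegative and bounded by $1$, gives $\int_\Omega\mathcal{I}_R(N,\omega)\,dP(\omega)\le 2R(1+C)e^{-cN}$ with $c=\min(\gamma_0,\eta)>0$.

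To pass to almost sure decay, for any $0<c'<c$ Markov's inequality yields $P(\{\mathcal{I}_R(N,\cdot)\ge e^{-c'N}\})\le 2R(1+C)e^{-(c-c')N}$, which is summable in $N$. By Borel--Cantelli, for a.e.\ $\omega$ one has $\mathcal{I}_R(N,\omega)<e^{-c'N}$ for all large $N$, whence $\liminf_{N\to\infty}(-\tfrac1N\log\mathcal{I}_R(N,\omega))\ge c'>0$. Corollary 3.6, applied with this fixed $R$, then transfers the bound from $\mathcal{I}_R$ to $\mathcal{I}$, which completes the proof.

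I expect the only genuinely delicate point to be the uniformity in $E$, and the merit of the argument is precisely that it circumvents it. The naive alternative would discretize $[-R,R]$, apply LD at grid points, and interpolate; but the modulus of continuity of $\tfrac1N\log\|T_{N,\omega}(E)\|$ is controlled only by $\sup_{E\in[-R,R]}\|T_{N,\omega}(E)\|$, which is of order $M^{N}$ with $M=C_0(1+R)$, forcing a grid of cardinality of order $M^{N}$ whose union bound would defeat the LD decay $e^{-\eta N}$ unless $\eta>\log M$. The first--moment route needs no grid: the per--energy bound above is already uniform in $E$, and only a single bad--event probability enters for each $E$. The sole structural constraint is $\epsilon<\gamma_0$, which keeps $2(\gamma_0-\epsilon)$ strictly positive and hence $c>0$.
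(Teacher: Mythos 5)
Your proof is correct, and it reaches the conclusion by a genuinely different arrangement of the same probabilistic ingredients. The paper (following Jitomirskaya et al.) applies Chebyshev and Borel--Cantelli to the Lebesgue measure of the exceptional energy set $\Omega_\epsilon^N(\omega)=\{E\in[-R,R]\mid |\frac{1}{N}\log\|T_{N,\omega}(E)\|-L(E)|\ge\epsilon\}$: Fubini and LD give $\int_\Omega m(\Omega_\epsilon^N(\omega))\,dP(\omega)\le 2RCe^{-\eta N}$, Borel--Cantelli yields $m(\Omega_\epsilon^N(\omega))\le e^{-\delta N}$ eventually for a.e.\ $\omega$, and only then is the energy integral split pathwise into the small bad set (integrand $\le 1$) and the good set (integrand $\le e^{-2(\gamma-\epsilon)N}$), with the high-energy tail absorbed via Theorem 3.5, producing the explicit rate $\min\{\delta,2(\gamma-\epsilon),2\log(R-C)\}$. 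You instead bound the annealed quantity $\int_\Omega\mathcal{I}_R(N,\omega)\,dP(\omega)$ first, via the per-energy moment estimate, and apply Markov and Borel--Cantelli directly to $\mathcal{I}_R(N,\omega)$; since the integrand is bounded by $1$, your moment bound implicitly contains the paper's bad-set computation (the $Ce^{-\eta N}$ term is exactly the expected bad-set mass per energy), so the two proofs are the same first-moment idea with Chebyshev applied one step later. What yours buys is economy --- no exceptional set, one fewer auxiliary parameter, and your closing observation that uniformity in $E$ is never needed is exactly the right diagnosis (it is also what the paper's Fubini step achieves); what the paper's version buys is the slightly more informative pathwise statement that, eventually, $\int_{-R}^R\|T_{N,\omega}(E)\|^{-2}dE\le e^{-\delta N}+2Re^{-2(\gamma-\epsilon)N}$ with all constants explicit. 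One substantive point you handled correctly but should keep visible: Corollary 3.6 is stated for a fixed potential, so invoking it pathwise requires the $\omega$-uniform spectral bound $|v_\omega(n)|\le\|f\|_\infty$, hence an $\omega$-independent $R_0$; the paper sidesteps this by feeding Theorem 3.5 directly into its final minimum, and either route is fine. Two cosmetic cautions: you overload $C$ (spectral constant $2+\|f\|_\infty$ versus the LD constant), and Tonelli needs joint measurability of $(E,\omega)\mapsto\|T_{N,\omega}(E)\|$, which holds because the entries are polynomials in $E$ with measurable coefficients, but deserves a word.
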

Although the proof is obvious from the discussion in the proof of Lemma 3.2 in \cite{jitomirskaya2019large}, we repeat it here.
\begin{proof}
Set $\gamma=\displaystyle\inf_{E\in\mathbb{R}}L(E)>0$ and fix $\epsilon,R$ that satisfy $0<\epsilon<\gamma$ and $R>3+\|f\|$ ($\|f\|$ is the norm in $L^\infty(\Omega,P)$). By the property LD, there are $\eta,C>0$ such that
\[ P\left(\left\{\omega\in\Omega\mid\left|\frac{1}{N}\log\|T_{N,\omega}(E)\|-L(E)\right|\ge\epsilon \right\}\right)\le Ce^{-\eta N},\ ^\forall N\in\mathbb{N},^\forall E\in[-R,R]. \]
Let us denote $m$ Lebesgue measure on $\mathbb{R}$. Denote
\[ \Omega_\epsilon^N=\left\{(E,\omega)\in [-R,R]\times\Omega\mid\left|\frac{1}{N}\log\|T_{N,\omega}(E)\|-L(E)\right|\ge\epsilon \right\} ,\]
\[ \Omega_\epsilon^N(\omega)=\{E\in [-R,R]\mid(E,\omega)\in\Omega_\epsilon^N\}, \]
then we have
\[ m\times P(\Omega_\epsilon^N)\le2RCe^{-\eta N}. \]
Fix $\delta$ such that $0<\delta<\eta$ and set
\[ X_\delta^N=\{\omega\in\Omega\mid m(\Omega_\epsilon^N(\omega))\le e^{-\delta N} \}, \]
then we get
\begin{eqnarray*}
P(X_\delta^{N,C})&\le& e^{\delta N}\int_{X_\delta^{N,C}}m(\Omega_\epsilon^N(\omega))P(d\omega) \\
&\le&e^{\delta N}m\times P(\Omega_\epsilon^N) \\
&\le&2RCe^{-(\eta-\delta)N}.
\end{eqnarray*}

\[ \sum_{N=1}^\infty P(X_\delta^{N,C})<\infty \]
holds and by Borel-Cantelli lemma,
\[ P\left(\liminf_{N\to\infty}X_\delta^N\right)=1 .\]
This means that for a.e. $\omega$ there is $N(\omega)\in\mathbb{N}$ such that if $N\ge N(\omega)$ then $m(\Omega_\epsilon^N(\omega))\le e^{-\delta N}$ holds. Obviously such $\omega$ satisfies 
\begin{eqnarray*}
\int^R_{-R}\frac{dE}{\|T_{N,\omega}(E)\|^2}&\le&\int_{\Omega_\epsilon^N(\omega)}\frac{dE}{\|T_{N,\omega}(E)\|^2}+\int_{\Omega_\epsilon^N(\omega)^C} \frac{dE}{\|T_{N,\omega}(E)\|^2} \\
&\le& e^{-\delta N}+\int^R_{-R}\frac{dE}{e^{2(L(E)-\epsilon)N}} \\
&\le& e^{-\delta N}+2Re^{-2(\gamma-\epsilon)N}
\end{eqnarray*}
for $N\ge N(\omega)$. By this estimate and Theorem 3.5, we obtain
\[ \liminf_{N\to\infty}\left(-\frac{1}{N}\log\left(\int_\mathbb{R}\frac{dE}{\|T_{N,\omega}(E)\|^2}\right)\right)\ge\min\{\delta,2(\gamma-\epsilon),2\log(R-C)\}>0 .\]
\end{proof}

\subsubsection{Examples}
The continuity and the Large deviation type estimate of the Lyapunov exponent are already well investigated in the context of ergodic Schr\"{o}dinger operators \cite{bourgain2002continuity}. Here we show some physically important examples. See \cite{duarte2016lyapunov} for well-organized results for the continuity and the large deviation type estimate of the Lyapunov exponent. Here we would like to show some examples.

\underline{The Anderson model}\\
Let $K\subset\mathbb{R}$ be a compact subset, $\rho$ be a probability measure on $K$ such that $\#\mathrm{supp}\rho\ge2$ ($\#$ is the number of elements of the set). Define $\Omega=K^\mathbb{Z}$ and $P=\rho^\mathbb{Z}$. Let $\phi$ be a shift on $\Omega$, that is, $(\phi\omega)_n=\omega_{n+1}$. $f(\omega)=\omega_0$. This is a model such that the value of the potential at each site is the i.i.d. random variable. As is well known, this model exhibits Anderson localization. The following theorem is a statement called spectral localization \cite{carmona1987}.
\begin{thm}
For a.e. $\omega\in\Omega$, the following statements hold:
\begin{itemize}
    \item $h_\omega$ has pure point spectrum.
    \item Every eigenvector decays exponentially.
\end{itemize}
\end{thm}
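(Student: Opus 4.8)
The plan is to prove spectral localization by the classical three-ingredient route: positivity of the Lyapunov exponent, the large deviation estimate (property LD), and spectral averaging. First I would establish that the Lyapunov exponent is strictly positive at every energy. Because the single-site distribution $\rho$ satisfies $\#\mathrm{supp}\,\rho\ge2$, the closed subgroup of $SL(2,\mathbb{R})$ generated by the random transfer matrices $\{T_{1,\omega}(E)\}_{\omega}$ (real, since $v_\omega$ and $E$ are real) is noncompact and strongly irreducible for every $E\in\mathbb{R}$. Furstenberg's theorem then yields $L(E)>0$ for all $E$, and continuity of $L$ (valid for this model, cf.\ the continuity results cited in this section) gives $\inf_{E\in[-R,R]}L(E)>0$ on each compact window.

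Second, I would verify that property LD holds for this model. For i.i.d.\ potentials this is standard: the large deviation bound for $\frac{1}{N}\log\|T_{N,\omega}(E)\|$ follows from the theory of products of i.i.d.\ random matrices (e.g.\ Le Page's theorem), uniformly for $E$ in a compact set, using the contraction and irreducibility already guaranteed by the Furstenberg setup. With LD in hand, the Borel--Cantelli argument already carried out in the proof of Theorem 3.10 applies verbatim on each compact energy window, and together with positivity of $L(E)$ it forces exponential growth of $\|T_{N,\omega}(E)\|$ for a.e.\ $\omega$ and Lebesgue-a.e.\ $E$. As a first spectral consequence, positivity of $L(E)$ yields $\sigma_{ac}(h_\omega)=\emptyset$ by the Ishii--Pastur theorem, so the spectrum is purely singular.

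Third, and this is the analytic core, I would upgrade this Lebesgue-almost-everywhere growth to genuine exponential decay of eigenfunctions. By the Oseledets multiplicative ergodic theorem, for a.e.\ $\omega$ and Lebesgue-a.e.\ $E$ the transfer cocycle has a one-dimensional contracting and a one-dimensional expanding direction, both with rate $L(E)>0$. A polynomially bounded generalized eigenfunction, which exists for spectrally a.e.\ $E$ by Schnol's theorem, cannot lie in the expanding direction, so it must decay exponentially and hence be genuinely $\ell^2$. This simultaneously produces honest eigenvalues (proving pure point spectrum away from the exceptional energies) and gives the exponential decay rate $\ge L(E)$ of each eigenvector, which is the second assertion of the theorem. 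The passage from ``generalized eigenfunction'' to ``$\ell^2$ eigenvector'', and the verification that the spectral measure $\mu_\omega$ charges no singular continuous part, is completed by rank-one perturbation theory: the Simon--Wolff criterion lets one interchange the Lebesgue-a.e.\ statement with the $\mu_\omega$-a.e.\ statement after averaging over the boundary phase.

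The main obstacle is precisely this last interchange of ``Lebesgue-a.e.\ $E$'' with ``spectrally a.e.\ $E$''. A priori the spectral measure $\mu_\omega$ could concentrate on the Lebesgue-null set of energies at which the generalized eigenfunction fails to decay, so positivity of the Lyapunov exponent by itself does not exclude singular continuous spectrum. The resolution requires the spectral-averaging (rank-one) machinery together with a Fubini argument on $\Omega\times\mathbb{R}$, exploiting the LD estimate to control the exceptional set uniformly enough that it is not charged; this is the content of \cite{carmona1987}, and it is where the delicate probabilistic work resides.
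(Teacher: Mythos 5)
The paper itself offers no proof of this theorem: it is quoted as a known result (``spectral localization'') with a citation to \cite{carmona1987}, so your proposal has to be judged on its own merits rather than against an argument in the text. Your first two ingredients are sound in the stated generality: for $\#\mathrm{supp}\,\rho\ge2$ the Furstenberg group is noncompact and strongly irreducible at every real energy, so $L(E)>0$ everywhere; continuity of $L$ and the uniform large deviation estimate on compact energy windows are available for i.i.d.\ transfer matrices; and Ishii--Pastur correctly kills the absolutely continuous spectrum. You also correctly identify the genuine trap, namely that Oseledets plus Schnol only gives decay for Lebesgue-a.e.\ $E$, while the spectral measure could a priori live on the exceptional null set.

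The gap is in your proposed resolution of that trap. Rank-one spectral averaging in the Simon--Wolff style needs an absolutely continuous parameter to average over --- either an a.c.\ component in the distribution of the potential at some site, or a genuine boundary-condition family --- and the hypotheses here allow $\rho$ to be purely atomic (e.g.\ Bernoulli, two points). For such $\rho$, a conclusion valid ``for Lebesgue-a.e.\ coupling $\lambda$'' says nothing about the actual operator, because the law of $v_\omega(0)$ charges only finitely many values; and since the model lives on $\ell^2(\mathbb{Z})$ there is no boundary phase to average over either. This is exactly why \cite{carmona1987}, to which you attribute the step, does \emph{not} use Simon--Wolff: for singular distributions Carmona--Klein--Martinelli extract H\"older continuity of the integrated density of states from the large deviation estimate and then run a multiscale analysis. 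Alternatively, the paper's other cited reference \cite{bucaj2019localization} derives localization from precisely your first two ingredients (positivity of $L$ and uniform LD estimates) but replaces spectral averaging by a direct elimination of the resonant set of generalized eigenvalues. As written, your outline proves the theorem only under an additional regularity assumption on $\rho$ (say, an absolutely continuous component, as in Kunz--Souillard or Simon--Wolff), not in the generality $\#\mathrm{supp}\,\rho\ge2$ claimed by the statement.
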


By Theorem 3.7, the current converges to $0$ as $N\to\infty$ for a.e. $\omega$ (we can apply Theorem 3.7 for the system on $\mathbb{Z}$, since absolutely continuous spectrum is stable under trace class perturbations). 
Moreover, since the Lyapunov exponent satisfies the large deviation type estimate and $\displaystyle\inf_{E\in\mathbb{R}}L(E)>0$ \cite{bucaj2019localization}, the current decays exponentially by Theorem 3.9.

\underline{The Fibonacci Hamiltonian}\\
This model was introduced in \cite{kohmoto1983localization,ostlund1983one} and has been studied as a model of a one-dimensional quasi-crystal. See \cite{damanik2016fibonacci} for detail. The Fibonacci Hamiltonian is defined as follows: 
$\Omega=\mathbb{T}$, $P$ : Lebesgue measure. $\phi\omega=\omega+\alpha$, where $\alpha=\frac{\sqrt{5}-1}{2}$. $f(\omega)=-\lambda\chi_{[1-\alpha,1)}(\omega)$.

The spectrum is independent of $\omega\in\mathbb{T}$ (we denote it by $\Sigma_\lambda$) and singular continuous. It is known that the Lyapunov exponent $L(E)$ is continuous and is $0$ on $\Sigma_\lambda$.  Thus by Theorem 3.7, 3.8, although the current converges to $0$ as $N\to\infty$, it does not decay exponentially. The more can be said for this model. In the case where $\omega=0$, it is shown that the norm of the transfer matrix is bounded by the power of the sample size $N$ on the spectrum \cite{iochum1991power} : There is an $E$-independent constant $\theta>0$ such that if $E\in\Sigma_\lambda$ then $\|T_N(E)\|\le N^\theta$. Note that this fact does not imply the power law decay of the current immediately,
because the Lebesgue measure of the spectrum $\Sigma_\lambda$ is $0$. However, by combining the results in \cite{damanik2008fractal,sutHo1987spectrum}, we can conclude the power law decay of the current.

\begin{thm}
Let $\mathrm{dim}_H\Sigma_\lambda$ be the Hausdorff dimension of $\Sigma_\lambda$ ( $\mathrm{dim}_H\Sigma_\lambda\in(0,1)$ by \cite{damanik2016fibonacci}). For any $\xi\in(0,\mathrm{dim}_H\Sigma_\lambda)$, there is a constant $C_{\xi}>0$ such that
\[ \mathcal{I}(N)\ge \frac{C_\xi}{N^{(\frac{1}{\xi}-1)+2\theta}}. \]
\end{thm}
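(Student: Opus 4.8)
The plan is to bound $\mathcal{I}(N)=\int_\mathbb{R}\|T_N(E)\|^{-2}\,dE$ from below by throwing away all of the energy axis except a thin neighbourhood of the spectrum, on which the power-law bound $\|T_N(E)\|\le N^\theta$ of \cite{iochum1991power} keeps the integrand large. Writing $U_\delta=\{E\in\mathbb{R}\mid\mathrm{dist}(E,\Sigma_\lambda)<\delta\}$ and fixing a scale $\delta_N$ to be chosen, we have, since the integrand is nonnegative and $\Sigma_\lambda\subset[-C,C]$ keeps everything in a fixed compact set,
\[ \mathcal{I}(N)\ge\int_{U_{\delta_N}}\frac{dE}{\|T_N(E)\|^2}. \]
The problem then splits into two independent estimates: an upper bound for $\|T_N(E)\|$ valid on the whole neighbourhood $U_{\delta_N}$ (not merely on $\Sigma_\lambda$), and a lower bound for the Lebesgue measure $|U_{\delta_N}|$. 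Balancing the two scales is what forces the exponent $1/\xi$ to appear.

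The measure bound is the elementary half. Since $\dim_H\Sigma_\lambda>\xi$ and Hausdorff dimension never exceeds the lower box-counting dimension, we have $\underline{\dim}_B\Sigma_\lambda>\xi$; hence for every sufficiently small $\delta$ a maximal $\delta$-separated subset $\{E_1,\dots,E_{P(\delta)}\}\subset\Sigma_\lambda$ has cardinality $P(\delta)\ge\delta^{-\xi}$. The open intervals $(E_j-\delta/2,\,E_j+\delta/2)$ are pairwise disjoint and contained in $U_\delta$, so
\[ |U_\delta|\ge P(\delta)\cdot\delta\ge\delta^{\,1-\xi}. \]
Taking $\delta_N=N^{-1/\xi}$ gives $|U_{\delta_N}|\ge N^{-(1/\xi)(1-\xi)}=N^{-(1/\xi-1)}$, which is precisely the power of $N$ carried by the factor $C_\xi\,N^{-(1/\xi-1)}$ in the statement; the constant $C_\xi$ absorbs the threshold and the finitely many small $N$.

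The upper bound on the neighbourhood is the substantive point and the \emph{main obstacle}. On the spectrum itself \cite{iochum1991power} gives $\|T_N(E)\|\le N^\theta$ for $E\in\Sigma_\lambda$, uniformly in $N$; what is needed is that $\|T_N(E)\|\le C\,N^\theta$ persists for all $E\in U_{\delta_N}$ with $\delta_N=N^{-1/\xi}$. A crude factor-by-factor comparison of $T_N(E)$ with $T_N(E_0)$ for the nearest $E_0\in\Sigma_\lambda$ is useless, as it loses a factor exponential in $N$. Instead one differentiates the cocycle, $\partial_E T_N=\sum_{k=1}^N T_{[k+1,N]}\,(\partial_E A_k)\,T_{[1,k-1]}$ where $A_k(E)$ is the one-step transfer matrix at site $k$, and controls every sub-block norm $\|T_{[a,b]}(E)\|$ by the same power law; this is where the self-similar hierarchical structure of the Fibonacci spectrum from \cite{sutHo1987spectrum,damanik2008fractal} is essential, as it propagates the bound of \cite{iochum1991power} from $\Sigma_\lambda$ to every sub-block and thence to a genuine neighbourhood. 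Granting such a bound on $U_{\delta_N}$, the two estimates combine to give
\[ \mathcal{I}(N)\ge\int_{U_{\delta_N}}\frac{dE}{\|T_N(E)\|^2}\ge\frac{|U_{\delta_N}|}{(C\,N^\theta)^2}\ge\frac{c}{C^2}\,N^{-(1/\xi-1)}\,N^{-2\theta}=\frac{C_\xi}{N^{(1/\xi-1)+2\theta}}, \]
which is the assertion. Everything outside the neighbourhood step is bookkeeping; the delicate part, and the one I expect to be hardest to make fully rigorous, is the verification that the power-law transfer-matrix estimate survives on a neighbourhood of width $N^{-1/\xi}$, since the naive Lipschitz control degrades and must be replaced by the structural results on the band hierarchy.
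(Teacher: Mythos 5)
Your measure half is sound: $\mathrm{dim}_H\le\underline{\mathrm{dim}}_B$ is correct, the disjoint packing intervals give $|U_\delta|\ge c\,\delta^{1-\xi}$ for all small $\delta$, and $\delta_N=N^{-1/\xi}$ indeed produces the factor $N^{-(1/\xi-1)}$. The genuine gap is the step you yourself flag, and it is not merely ``hard to make rigorous'' --- it fails on part of the claimed range of $\xi$. The telescoping/Gronwall scheme you sketch, with sub-blocks controlled via $\|T_{[k+1,n]}(E_0)\|=\|T_n(E_0)T_k(E_0)^{-1}\|\le\|T_n(E_0)\|\,\|T_k(E_0)\|$ (using $\|A^{-1}\|=\|A\|$ in $SL(2,\mathbb{C})$), yields at best $\|T_N(E)\|\le CN^{\theta}\exp\bigl(C'N^{1+2\theta}|E-E_0|\bigr)$, so it certifies the power law only on a window of width $\sim N^{-(1+2\theta)}$ around each $E_0\in\Sigma_\lambda$. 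This is narrower than your $\delta_N=N^{-1/\xi}$ whenever $\xi>1/(1+2\theta)$, while the theorem is asserted for every $\xi\in(0,\mathrm{dim}_H\Sigma_\lambda)$ with no relation between $\xi$ and $\theta$. Worse, since $|\Sigma_\lambda|=0$ by \cite{sutHo1987spectrum}, Lebesgue-almost all of $U_{\delta_N}$ lies inside gaps, where the Lyapunov exponent is positive and is expected to behave like $\mathrm{dist}(E,\Sigma_\lambda)^{1/2}$ near band edges; at depth $\delta_N$ this means $\|T_N(E)\|\sim e^{cN\sqrt{\delta_N}}=e^{cN^{1-1/(2\xi)}}$, superpolynomial once $\xi>1/2$, so no uniform polynomial bound can hold on your neighbourhood and the integrand is negligible on most of it. Run honestly at its provable width $N^{-(1+2\theta)}$, your construction gives $\mathcal{I}(N)\gtrsim N^{-[(1+2\theta)(1-\xi)+2\theta]}$, which is actually stronger than the theorem when $\xi<1/(1+2\theta)$ but strictly weaker on the remainder of the range.

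The argument the paper points to through its citations repairs exactly this step by replacing the metric neighbourhood with a spectral cover. By S\"ut\H{o}'s trace-map analysis \cite{sutHo1987spectrum}, $\Sigma_\lambda\subset\sigma_k\cup\sigma_{k+1}$, where $\sigma_k$ is the spectrum of the periodic approximant of period $F_k$ (the $k$-th Fibonacci number); choosing $k$ with $F_k\sim N$ gives a cover of $\Sigma_\lambda$ by $M=O(N)$ genuine closed intervals (bands) on which the trace orbit remains bounded through level $k$, and the Iochum--Testard mechanism \cite{iochum1991power}, in the form exploited in \cite{damanik2008fractal}, then yields $\|T_N(E)\|\le CN^{\theta}$ on the bands themselves, not only on $\Sigma_\lambda$ --- this is precisely the extension your neighbourhood step was trying to manufacture, obtained structurally rather than perturbatively. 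The measure input is then simpler than yours: $\xi<\mathrm{dim}_H\Sigma_\lambda$ forces $\mathcal{H}^{\xi}(\Sigma_\lambda)=\infty$, hence positive $\xi$-dimensional Hausdorff content, i.e.\ there is $c_\xi>0$ with $\sum_i|I_i|^{\xi}\ge c_\xi$ for every interval cover, and Jensen's inequality (concavity of $t\mapsto t^{\xi}$) gives $\sum_i|I_i|\ge M\,(c_\xi/M)^{1/\xi}\ge c_\xi' N^{-(1/\xi-1)}$ for the $M=O(N)$ bands. Integrating $\|T_N(E)\|^{-2}\ge C^{-2}N^{-2\theta}$ over the bands gives $\mathcal{I}(N)\ge C_\xi N^{-[(1/\xi-1)+2\theta]}$ on the full range of $\xi$, with no width choice, no lower bound on individual band lengths, and no off-spectrum transfer-matrix control needed.
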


\underline{Almost Mathieu operator}\\
This model is the representative example of quasi-periodic potential. 
$\Omega=\mathbb{T}$, $P$ : Lebesugue measure. $\phi\omega=\omega+\alpha$ for fixed $\alpha\in\mathbb{T}$. $f(\omega)=-2\lambda\cos(2\pi\omega)$. This model has two parameters $\alpha\in\mathbb{T},\lambda>0$, and the properties vary according to them. Since if $\alpha$ is rational, the porential is periodic, we assume that $\alpha$ is irrational. If $\lambda<1$, then for every $\omega\in\mathbb{T}$ the spectrum of $h_\omega$ is purely absolutely continuous. If $\lambda\ge1$, then for every $\omega\in\mathbb{T}$, absolutely continuous spectrum is empty, $\sigma_{ac}(h_\omega)=\emptyset$. So our interest is in the case where $\lambda\ge1$. The Lyapunov exponent $L(E)$ is continuous and its minimum is $\max\{\log\lambda,0\}$, which is the value on the spectrum \cite{bourgain2002continuity}. Thus, the current does not show the exponential decay for $\lambda=1$. If $\lambda>1$, it is shown that the property LD holds for appropriate $\alpha$, and the current decays exponentially
\cite{goldstein2001holder}.

\subsection{$\beta>0$ : with noise}
In this subsection we consider the current under dephasing noise. We obtain an explicit form of the current, which scales as $1/N$ for large $N$, in the case where the potential is absent (3.2.1). 3.2.2 deals with the general potential case. Unfortunately, the scaling of the current for general potentials is not obtained yet. But we can say a little about the current for strong noise regime.

\subsubsection{$v=0$}
Let us start with the case where $v=0$. In this case we can obtain an explicit form of the current $\mathcal{J}_\beta(N)$, using the equation
\[ \mathcal{J}_\beta(N)=-4(\alpha_{in}^l\alpha_{out}^r-\alpha_{out}^l\alpha_{in}^r)\langle e_1,l^{-1}(p_N)e_1\rangle .\]

Set $X=l^{-1}(p_N)$. $X$ is a self-adjoint operator on $\mathbb{C}^N$. Let us denote $X_{ij}=\langle e_i,Xe_j\rangle$. Since $X$ is self-adjoint, $X_{ji}=\overline{X_{ij}}$. Denote $\alpha_{in}^l+\alpha_{out}^l=\zeta_l>0,\ \alpha_{in}^r+\alpha_{out}^r=\zeta_r>0$. By $l(X)=p_N$, we have
\[ 0=\langle e_1,p_Ne_1\rangle=\langle e_1,l(X)e_1\rangle=-2\zeta_l X_{11}+iX_{21}-iX_{12} \]
\[ \to\ \mathrm{Im}X_{12}=\zeta_l X_{11}. \]
And for $n=2,\cdots,N-1$,
\[ 0=\langle e_n,p_Ne_n\rangle=iX_{n-1n}-iX_{nn-1}+iX_{n+1n}-iX_{nn+1} \]
\[ \to\ \mathrm{Im}X_{n-1n}=\mathrm{Im}X_{nn+1}=\zeta_l X_{11} .\]
By 
\[ \int^\infty_0T_t^*(2\zeta_l p_1+2\zeta_r p_N)dt=I ,\]
where $T_t^*$ is the dual action of $T_t$ ($\mathrm{Tr}aT_t(b)=\mathrm{Tr}T_t^*(a)b$), we get
\[ 2\zeta_l X_{11}+2\zeta_r X_{NN}=-1 \]
\[ \to X_{NN}=\left(-\frac{1}{2\zeta_r}-\frac{\zeta_l}{\zeta_r}X_{11}\right). \]
We have
\[ 0=\langle e_1,p_Ne_2\rangle=-\zeta_l X_{12}-\beta X_{12}+iX_{22}-iX_{11}-iX_{13}, \]
\[ 0=\langle e_{N-1},p_Ne_N\rangle=-\zeta_r X_{N-1N}-\beta X_{N-1N}+iX_{NN}+iX_{N-2N}-iX_{N-1N-1}, \]
and for $n=2,\cdots,N-2$
\[ 0=\langle e_n,p_Ne_{n+1}\rangle=-\beta X_{nn+1}+iX_{n+1n+1}+iX_{n-1n+1}-iX_{nn}-iX_{nn+2}. \]
Adding the imaginary part of the above three equations, we finally obtain
\begin{eqnarray*}
 0&=&X_{NN}-X_{11}-\zeta_l\cdot\zeta_l X_{11}-\zeta_r\cdot\zeta_l X_{11}-\beta(N-1)\cdot\zeta_l X_{11} \\
&=&\left(-\frac{1}{2\zeta_r}-\frac{\zeta_l}{\zeta_r}X_{11}\right)-X_{11}-2\zeta_l^2X_{11}-2\zeta_l\zeta_r X_{11}-\beta\zeta_l(N-1)X_{11}. 
\end{eqnarray*}
\[ \to X_{11}=-\frac{1}{2}\frac{1}{\zeta_l+\zeta_r+\zeta_l\zeta_r(\zeta_l+\zeta_r+\beta(N-1))} .\]
Thus the current $\mathcal{J}_\beta(N)$ is expressed as follows:
\begin{thm}
When $v=0$, then
\[ \mathcal{J}_\beta(N)=\frac{2(\alpha_{in}^l\alpha_{out}^r-\alpha_{out}^l\alpha_{in}^r)}{\alpha_{in}^l+\alpha_{out}^l+\alpha_{in}^r+\alpha_{out}^r+(\alpha_{in}^l+\alpha_{out}^l)(\alpha_{in}^r+\alpha_{out}^r)(\alpha_{in}^l+\alpha_{out}^l+\alpha_{in}^r+\alpha_{out}^r+\beta(N-1))} .\]
\end{thm}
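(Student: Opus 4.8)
The plan is to reduce the whole computation to the single scalar $X_{11}=\langle e_1,l^{-1}(p_N)e_1\rangle$, since Theorem 2.2 already gives $\mathcal{J}_\beta(N)=-4(\alpha_{in}^l\alpha_{out}^r-\alpha_{out}^l\alpha_{in}^r)X_{11}$. Writing $X=l^{-1}(p_N)$ and abbreviating $\zeta_l=\alpha_{in}^l+\alpha_{out}^l$, $\zeta_r=\alpha_{in}^r+\alpha_{out}^r$, I would work directly from the defining identity $l(X)=p_N$ and read it off entry by entry. Here $l(a)=-i[h,a]-\{\zeta_l p_1+\zeta_r p_N,a\}+\beta(\sum_n p_nap_n-a)$ with $h$ the free hopping operator ($v=0$), and $X$ is self-adjoint, so $X_{ji}=\overline{X_{ij}}$ and every diagonal entry is real. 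The point is that the nearest-neighbour structure of $h$ makes $\langle e_i,l(X)e_j\rangle$ a short local expression in the entries of $X$, so $l(X)=p_N$ becomes a sparse linear system that can be collapsed by a single summation.

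First I would extract the diagonal equations. For $1\le n\le N-1$ one has $\langle e_n,p_Ne_n\rangle=0$; in $\langle e_n,l(X)e_n\rangle$ the dephasing term drops out on the diagonal, the commutator produces the difference of neighbouring off-diagonal entries, and the anticommutator contributes only at $n=1$. Taking imaginary parts yields $\mathrm{Im}\,X_{12}=\zeta_l X_{11}$ together with $\mathrm{Im}\,X_{n-1,n}=\mathrm{Im}\,X_{n,n+1}$ at the interior sites, i.e. the chain of equalities $\mathrm{Im}\,X_{n,n+1}=\zeta_l X_{11}$ for every bond $n$ (current conservation). Independently, taking the full trace of $l(X)=p_N$ kills the commutator and the dephasing part and leaves $-2\zeta_l X_{11}-2\zeta_r X_{NN}=\mathrm{Tr}\,p_N=1$, hence $X_{NN}=-\tfrac{1}{2\zeta_r}-\tfrac{\zeta_l}{\zeta_r}X_{11}$.

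Next I would treat the super-diagonal equations $\langle e_n,l(X)e_{n+1}\rangle=0$ for $1\le n\le N-1$. Each is a relation among $X_{nn},X_{n+1,n+1}$, the imaginary part $\mathrm{Im}\,X_{n,n+1}$ (carrying a factor $\beta$, plus $\zeta_l$ at $n=1$ and $\zeta_r$ at $n=N-1$), and the next-to-nearest real parts $\mathrm{Re}\,X_{n-1,n+1}$, $\mathrm{Re}\,X_{n,n+2}$. The key step is to add the imaginary parts of all $N-1$ of these equations: the diagonal pieces telescope to $X_{NN}-X_{11}$; the next-to-nearest real parts cancel in pairs across consecutive bonds; and each surviving term is a known multiple of $\zeta_l X_{11}$ by the conservation law above. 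Summing the $\beta$-contributions over the $N-1$ bonds, together with the boundary terms $\zeta_l$ and $\zeta_r$, produces the factor $\zeta_l+\zeta_r+\beta(N-1)$, so the whole sum collapses to the single scalar equation $X_{NN}-X_{11}=\zeta_l\big(\zeta_l+\zeta_r+\beta(N-1)\big)X_{11}$.

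The main obstacle is precisely this summation: one must check that the next-to-nearest real parts $\mathrm{Re}\,X_{n-1,n+1}$ and $\mathrm{Re}\,X_{n,n+2}$ cancel exactly, which requires careful bookkeeping of the two boundary equations ($n=1$ and $n=N-1$), where the commutator is missing one neighbour and the anticommutator injects the extra $\zeta_l$, $\zeta_r$ terms. Once this cancellation is verified, the remainder is routine: substituting $X_{NN}=-\tfrac{1}{2\zeta_r}-\tfrac{\zeta_l}{\zeta_r}X_{11}$ into the collapsed equation and solving the resulting scalar linear equation gives $X_{11}=-\tfrac12\big[\zeta_l+\zeta_r+\zeta_l\zeta_r(\zeta_l+\zeta_r+\beta(N-1))\big]^{-1}$, and inserting this into $\mathcal{J}_\beta(N)=-4(\alpha_{in}^l\alpha_{out}^r-\alpha_{out}^l\alpha_{in}^r)X_{11}$ yields the stated formula.
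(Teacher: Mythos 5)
Your proposal is correct and follows essentially the same route as the paper's proof: reduce everything to $X_{11}$ via Theorem 2.2, read off $\mathrm{Im}\,X_{n,n+1}=\zeta_l X_{11}$ from the diagonal entries of $l(X)=p_N$, sum the imaginary parts of the super-diagonal equations so that the diagonals telescope to $X_{NN}-X_{11}$ and the next-to-nearest real parts cancel, and solve the resulting scalar equation. The only cosmetic difference is that you derive the constraint $2\zeta_l X_{11}+2\zeta_r X_{NN}=-1$ by taking the trace of $l(X)=p_N$ directly (using cyclicity and that dephasing is trace-preserving), whereas the paper pairs $X$ against the dual-semigroup identity $\int_0^\infty T_t^*(2\zeta_l p_1+2\zeta_r p_N)\,dt=I$; the two arguments are equivalent.
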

The current $\mathcal{J}_\beta(N)$ decays as $1/N$ for large $N$ and its coefficient is
\[ \frac{2(\alpha_{in}^l\alpha_{out}^r-\alpha_{out}^l\alpha_{in}^r)}{\beta(\alpha_{in}^l+\alpha_{out}^l)(\alpha_{in}^r+\alpha_{out}^r)} .\]
For $\alpha_{in}^l=\Gamma\frac{1-\mu}{2},\ \alpha_{out}^l=\Gamma\frac{1+\mu}{2},\ \alpha_{in}^r=\Gamma\frac{1+\mu}{2},\ \alpha_{out}^r=\Gamma\frac{1-\mu}{2} ,\ \beta=2\gamma$, we have
\[ \mathcal{J}_{2\gamma}(N)=-\frac{\mu}{\Gamma+1/\Gamma+\gamma(N-1)}. \]
This corresponds to the result of \cite{vznidarivc2010exact} (note that the Hamiltonian in \cite{vznidarivc2010exact} corresponds to $2H$ in our setting).

\subsubsection{$v$ : general potentials}
In the case of general potentials, the scaling of $\mathcal{J}_\beta(N)$ is not obtained. But for large $\beta$, we can know a little about the current. First, we consider the strong noise limit $\beta\to\infty$. And then, large but finite noise $\beta=\epsilon N$ is discussed and it is shown that the current may be increased by adding large noise in the case of random potentials. 

The same calculation as the case where $v=0$ shows that
\[ [\zeta_l+\zeta_r+\zeta_l\zeta_r(\zeta_l+\zeta_r+\beta(N-1))]X_{11}=-\frac{1}{2}+\zeta_r\sum_{n=1}^{N-1}(v(n+1)-v(n))\mathrm{Re}X_{nn+1}.  \]
Since $X$ is bounded:
\[ 0\le-X=-l^{-1}(p_N)=\int^\infty_0e^{tl}(p_N)dt\le\frac{1}{\zeta_r}I ,\]
we have
\[ |X_{nn+1}|\le\frac{1}{\zeta_r\beta}(3+\max\{\zeta_l,\zeta_r,1\})\to0\ (\beta\to\infty). \]
Thus we obtain
\[ \lim_{\beta\to\infty}\beta\mathcal{J}_\beta(N)=\frac{2(\alpha_{in}^l\alpha_{out}^r-\alpha_{out}^l\alpha_{in}^r)}{(\alpha_{in}^l+\alpha_{out}^l)(\alpha_{in}^r+\alpha_{out}^r)(N-1)}. \]
This means that when one expands $\mathcal{J}_\beta(N)$ in terms of $1/\beta$ for large $\beta$, the dominant term is 
\[ \frac{2(\alpha_{in}^l\alpha_{out}^r-\alpha_{out}^l\alpha_{in}^r)}{(\alpha_{in}^l+\alpha_{out}^l)(\alpha_{in}^r+\alpha_{out}^r)(N-1)}\frac{1}{\beta}, \]
which is independent of potentials and scales $1/N$ for large $N$. But there is a gap between this fact and the claim that $\mathcal{J}_\beta(N)$ scales as $1/N$.

Next, we consider large $\beta$ not taking limit $\beta\to\infty$. Denote $C=3+\max\{\zeta_l,\zeta_r,1\}$. Fix $\epsilon>4\|v\|C$ and put $\beta=\epsilon N$, then we have
\begin{eqnarray*}
[\zeta_l+\zeta_r+\zeta_l\zeta_r(\zeta_l+\zeta_r+\beta(N-1))]X_{11}&=&-\frac{1}{2}+\zeta_r\sum_{n=1}^{N-1}(v(n+1)-v(n))\mathrm{Re}X_{nn+1}\\
&\le&-\left(\frac{1}{2}-\frac{2\|v\|C}{\epsilon}\right) .
\end{eqnarray*}
Therefore, the current $\mathcal{J}_\beta(N)$ is bounded below as
\[ \mathcal{J}_{\epsilon N}(N)\ge\frac{4(\alpha_{in}^l\alpha_{out}^r-\alpha_{out}^l\alpha_{in}^r)}{\zeta_l+\zeta_r+\zeta_l\zeta_r(\zeta_l+\zeta_r+\epsilon N(N-1))}\left(\frac{1}{2}-\frac{2\|v\|C}{\epsilon}\right)>0 .\]
Let us consider the Anderson model as an example. Recall that if $\beta=0$, the current shows the exponential decay for a.e. $\omega$. It turns out that by the above inequality, for such $\omega$, 
\[ \mathcal{J}_{\epsilon N}(N,\omega)\ge\mathcal{J}_0(N,\omega) \]
holds for sufficiently large $N$. Thus, strong noise increases the current in this example. It is remarkable that although the noise is symmetric and does not have the effect to flow the particles to a specific direction, it could increase the current. Note that the noise does not always increase the current (consider the case where $v=0$). 

\section{$d$-dimensional systems}
In the previous sections we focused on one-dimensional systems. In this section we consider an extension to general $d$-dimensional systems. As in the one-dimensional case, we assume that particles go in and out in a specific direction. Although the case where $d=2,3$ is physically important, we discuss general $d$-dimensional systems here. Since the analysis is almost the same as one-dimensional systems, we do not discuss the detail here.

For $N_1,N_2,\cdots,N_d\in\mathbb{N}$, let us consider a finite $d$-dimensional lattice
\[ \mathfrak{L}=\{1,2,\cdots,N_1\}\times\{1,2,\cdots,N_2\}\times\{1,2,\cdots,N_d\} .\]
An element of this lattice is written as 
\[ \nu=(\nu_1,\nu_2,\cdots,\nu_d)\in\mathfrak{L}. \]
We assume that particles go in and out in the direction'1'. For $i=1,2,\cdots,N_1$, define
\[ M_i=\{\nu\in\mathfrak{L}\mid\nu_1=i\} .\]
This is a plane vertical to the direction'1'. Suppose that particles go in and out at the surfaces $M_1,M_{N_1}$. For $\nu\in\mathfrak{L}\setminus M_{N_1}$, define
\[ \nu_+=(\nu_1+1,\nu_2,\cdots,\nu_d)\in\mathfrak{L} .\]
And let $NN(\nu)$ be the set of nearest-neighbors of $\nu$ in $\mathfrak{L}$.

one-particle Hilbert space that describes Fermi particles moving on this lattice is $\mathbb{C}^{|\mathfrak{L}|}$, where $|\mathfrak{L}|=\displaystyle\prod_{n=1}^dN_n$. We denote its standard basis by $\{e_\nu\}_{\nu\in\mathfrak{L}}$. one-particle Hamiltonian $h$ is given as
\[ (h\psi)(\nu)=-\sum_{\mu\in NN(\nu)}\psi(\mu)+v(\nu)\psi(v),\ \psi\in\mathbb{C}^{|\mathfrak{L}|} .\]
Let $H$ be the total Hamiltonian constructed by this one-particle Hamiltonian $h$. Let us consider the following generator $L$ in many body system:
\begin{eqnarray*}
L(A)&=&i[H,A]\\
&&+\alpha_{in}^l\sum_{\nu\in M_1}(2a_\nu\theta(A)a^*_\nu-\{a_\nu a^*_\nu,A\}) +\alpha_{out}^l\sum_{\nu\in M_1}(2a^*_\nu\theta(A)a_\nu-\{a^*_\nu a_\nu,A\}) \\
&&+\alpha_{in}^r\sum_{\nu\in M_{N_1}}(2a_\nu\theta(A)a^*_\nu-\{a_\nu a^*_\nu,A\}) +\alpha_{out}^r\sum_{\nu\in M_{N_1}}(2a^*_\nu\theta(A)a_\nu-\{a^*_\nu a_\nu,A\}) \\
&&+\beta\sum_{\nu\in\mathfrak{L}}\left(a^*_\nu a_\nu Aa^*_\nu a_\nu-\frac{1}{2}\{a^*_\nu a_\nu,A\}\right).
\end{eqnarray*}
Here, we denote $a^\#(e_\nu)=a^\#_\nu$ as usual. $\alpha_{in}^l,\alpha_{out}^l,\alpha_{in}^r,\alpha_{out}^r,\beta$ are real numbers that are greater than or equal to $0$, and we assume that $\alpha_{in}^l+\alpha_{out}^l>0,\ \alpha_{in}^r+\alpha_{out}^r>0$. By the same calculation as one-dimensional case, it turns out that the dynamics of the two point function is described in terms of that of one-particle system. For $e_\nu\in\mathbb{C}^{|\mathfrak{L}|}$, denote a 1-rank projection by $p_\nu=|e_\nu\rangle\langle e_\nu|$. If
\[ \omega(a^*(f)a(g))=\langle g,Rf\rangle ,\]
then $R(t)$ defined by the relation
\[ \omega\circ e^{tL}(a^*(f)a(g))=\langle g,R(t)f\rangle, \]
is expressed as 
\[ R(t)=e^{tl}(R)+\int^t_0e^{sl}\left(2\alpha_{in}^l\sum_{\nu\in M_1}p_\nu+2\alpha_{in}^r\sum_{\nu\in M_{N_1}}p_\nu\right)ds ,\]
where $l$ is a linear map on $M_{|\mathfrak{L}|}(\mathbb{C})$ defined as
\[ l(a)=-i[h,a]-(\alpha_{in}^l+\alpha_{out}^l)\left\{\sum_{\nu\in M_1}p_\nu,a\right\}-(\alpha_{in}^r+\alpha_{out}^r)\left\{\sum_{\nu\in M_{N_1}}p_\nu,a\right\}+\beta\left(\sum_{\nu\in\mathfrak{L}}p_\nu ap_\nu-a\right). \]
It generates a semigroup of CP maps $e^{tl}$. By the same discussion as the one-dimensional system, we obtain $\displaystyle\lim_{t\to\infty}e^{tl}=0$. Thus $R(t)$ converges to
\[ \int^\infty_0e^{tl}(2\alpha_{in}^l P_1+2\alpha_{in}^r P_{N_1})dt \]
as $t\to\infty$, where $P_1=\displaystyle\sum_{\nu\in M_1}p_\nu$ and $P_{N_1}=\displaystyle\sum_{\nu\in M_{N_1}}p_\nu$. In the long time limit, the number of particles which move from $M_n$ to $M_{n+1}$ per time (current) becomes
\[ \sum_{\nu\in M_n}\mathrm{Im}\int^\infty_0\left\langle e_{\nu_+},e^{tl}(2\alpha_{in}^l P_1+2\alpha_{in}^r P_{N_1})e_{\nu}\right\rangle dt .\]
It is independent of $n$ (we denote it by $\mathcal{J}(N_1,\cdots,N_d)$). The same calculation as one-dimensional system shows that
\[ \mathcal{J}(N_1,\cdots,N_d)=
4(\alpha_{in}^l\alpha_{out}^r-\alpha_{out}^l\alpha_{in}^r)\int^\infty_0\mathrm{Tr}P_1e^{tl}(P_{N_1})dt=-4(\alpha_{in}^l\alpha_{out}^r-\alpha_{out}^l\alpha_{in}^r)\mathrm{Tr}P_1l^{-1}(P_{N_1}). \]

In the case where $v=0$, we obtain the explicit form of the current:
\begin{thm}
\begin{eqnarray*}
&&\mathcal{J}(N_1,\cdots,N_d)=\\
&&\frac{2(\alpha_{in}^l\alpha_{out}^r-\alpha_{out}^l\alpha_{in}^r)\prod_{n=2}^dN_n}{(\beta(N_1-1)+\alpha_{in}^l+\alpha_{out}^l+\alpha_{in}^r+\alpha_{out}^r)(\alpha_{in}^l+\alpha_{out}^l)(\alpha_{in}^r+\alpha_{out}^r)+\alpha_{in}^l+\alpha_{out}^l+\alpha_{in}^r+\alpha_{out}^r} .
\end{eqnarray*}
\end{thm}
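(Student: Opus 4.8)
The plan is to follow the one-dimensional argument of Theorem 3.11, the new ingredient being that for $v=0$ the transverse hopping terms drop out after summing over each layer $M_i$, so the computation collapses onto a purely longitudinal telescoping identity. Write $\zeta_l=\alpha_{in}^l+\alpha_{out}^l$, $\zeta_r=\alpha_{in}^r+\alpha_{out}^r$ and $S=\prod_{n=2}^dN_n=|M_i|$ for the cross-section. Put $X=l^{-1}(P_{N_1})$, a self-adjoint operator with $l(X)=P_{N_1}$; since $\mathcal{J}(N_1,\dots,N_d)=-4(\alpha_{in}^l\alpha_{out}^r-\alpha_{out}^l\alpha_{in}^r)\,\mathrm{Tr}\,P_1X$, the entire task is to compute $D_1:=\mathrm{Tr}\,P_1X=\sum_{\nu\in M_1}\langle e_\nu,Xe_\nu\rangle$. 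I introduce the layer quantities $D_i=\sum_{\nu\in M_i}\langle e_\nu,Xe_\nu\rangle$ and the longitudinal flux $\mathcal{C}_i=\sum_{\nu\in M_i}\mathrm{Im}\langle e_\nu,Xe_{\nu_+}\rangle$ for $1\le i\le N_1-1$, writing $\nu_-$ and $\nu_{++}$ for the obvious longitudinal shifts.

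First I would extract the diagonal relations. Evaluating $\langle e_\nu,l(X)e_\nu\rangle=\langle e_\nu,P_{N_1}e_\nu\rangle$ and summing over $\nu\in M_i$, the commutator part produces $2\sum_{\mu\in NN(\nu)}\mathrm{Im}\langle e_\nu,Xe_\mu\rangle$; the contributions of the transverse neighbours $\mu$ lying in the same layer cancel in pairs, since $\mathrm{Im}\langle e_\nu,Xe_\mu\rangle=-\mathrm{Im}\langle e_\mu,Xe_\nu\rangle$ and the set of such pairs is symmetric, leaving only the longitudinal flux. This yields current conservation $\mathcal{C}_i\equiv\mathcal{C}$ together with the two boundary relations $\mathcal{C}=\zeta_lD_1$ and $-2\mathcal{C}-2\zeta_rD_{N_1}=S$.

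Next I would treat the longitudinal bonds. Taking the imaginary part of $\langle e_\nu,l(X)e_{\nu_+}\rangle=0$ and summing over $\nu\in M_i$, the dephasing term contributes $-\beta\mathcal{C}$, the boundary terms contribute $-\zeta_l\mathcal{C}$ at $i=1$ and $-\zeta_r\mathcal{C}$ at $i=N_1-1$, the diagonal entries assemble into $D_{i+1}-D_i$, and the next-nearest longitudinal couplings produce real-part sums $E_i:=\sum_{\nu\in M_i}\mathrm{Re}\langle e_\nu,Xe_{\nu_{++}}\rangle$ that enter as $E_{i-1}-E_i$. The key step is that the two transverse contributions cancel layerwise: each can be rewritten as a sum over ordered transverse-adjacent pairs within the layer, and relabelling the pair identifies the two sums, so their difference vanishes. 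Summing the resulting identity over $i=1,\dots,N_1-1$ makes the $E_i$ telescope to zero and leaves $D_{N_1}-D_1=(\zeta_l+\zeta_r+\beta(N_1-1))\mathcal{C}$.

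Finally, the three relations form a linear system in $\mathcal{C},D_1,D_{N_1}$; eliminating $\mathcal{C}$ and $D_{N_1}$ gives $D_1=-\tfrac{1}{2}S\,[\zeta_l+\zeta_r+\zeta_l\zeta_r(\zeta_l+\zeta_r+\beta(N_1-1))]^{-1}$, and substituting into $\mathcal{J}=-4(\alpha_{in}^l\alpha_{out}^r-\alpha_{out}^l\alpha_{in}^r)D_1$ reproduces the stated formula after expanding the denominator. I expect the main obstacle to be the bookkeeping in the off-diagonal sum, namely confirming that the transverse and next-nearest-neighbour terms organize exactly into the cancelling pairs and the telescoping $E_i$; once this combinatorial cancellation is in place the remaining algebra is identical to the one-dimensional case, and indeed $\mathcal{J}(N_1,\dots,N_d)=S\cdot\mathcal{J}_\beta(N_1)$, the one-dimensional current of length $N_1$ scaled by the cross-section.
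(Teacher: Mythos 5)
Your proposal is correct and takes exactly the route the paper intends: the paper omits the proof of this theorem, saying only that the analysis is the same as the one-dimensional $v=0$ computation of subsection 3.2.1 (your reference to ``Theorem 3.11'' should be Theorem 3.12), and your argument is precisely that computation carried over layer by layer. The two genuinely new bookkeeping points you isolate --- the pairwise cancellation of transverse terms via $\mathrm{Im}\langle e_\nu,Xe_\mu\rangle=-\mathrm{Im}\langle e_\mu,Xe_\nu\rangle$ and relabelling, and the telescoping of the $E_i$ --- both check out, and your three relations $\mathcal{C}=\zeta_l D_1$, $-2\mathcal{C}-2\zeta_r D_{N_1}=S$ (the layered analogue of the paper's identity $2\zeta_l X_{11}+2\zeta_r X_{NN}=-1$), and $D_{N_1}-D_1=(\zeta_l+\zeta_r+\beta(N_1-1))\mathcal{C}$ reproduce the stated formula, i.e.\ $\mathcal{J}(N_1,\dots,N_d)=\left(\prod_{n=2}^d N_n\right)\mathcal{J}_\beta(N_1)$.
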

Especially in the case where $d=3$, the current decreases in inverse proportion to the length of the sample $N_1$ and increases in proportion to the cross section $N_2\times N_3$.

\section{Discussion and conclusions}
In this paper, we investigated the current for a conduction model of Fermi particles on a finite lattice. When the dephasing noise is absent ($\beta=0$), this model is a special case of those in \cite{prosen2008third,davies1977}. First, we obtained the dynamics of two point function and proved that it converges to a constant independent of initial state. Next, we investigated the current, which is an important quantity in nonequilibrium systems and described by two point function and obtained a simple current formula (Theorem 2.2). Based on this formula, we considered the asymptotic behavior of the current. The results are as follows: 
\begin{description}
\item[noiseless ($\beta=0$)]One can evaluate the current using transfer matrix. For dynamically defined potentials, the asymptotic behavior is related to the property of the Lyapunov exponent. For example, the Anderson model shows the exponential decay of current.
\item[with noise ($\beta>0$)]For the case where $v=0$, the current is explicitly obtained and decays as $1/N$. The same analysis can be applied to higher dimensional systems. In three-dimensional case, the current increases in proportion to cross section and decreases in inverse proportion to the length of the sample for large sample size.
\end{description}
Apart from the case where $v=0$, we gave only inequalities for the asymptotic property in this paper. To obtain the exact scaling of the current for various models is our future work.

Finally we would like to discuss some related studies. As previously mentioned, the noiseless case is also studied in more general settings in \cite{prosen2008third,davies1977}. But we believe that it is our original work to obtain the current formula (Theorem 2.2) and investigate the asymptotic property based on it. In \cite{prosen2008third}, Prosen discussed the conduction model as an example and said that the current would decay exponentially for random potentials. But he did not give an exact proof for it. The model that noise exists and potential $v=0$ is studied in \cite{vznidarivc2010exact}, and the same current formula as ours (subsection 3.2.1) is obtained for special values of $\alpha_{in}^l,\alpha_{out}^l,\alpha_{in}^r,\alpha_{out}^r$. However, the approach is different from ours. We solved the time evolution of the current and showed that the current converges to a stable value independent of initial states. On the other hand, in \cite{vznidarivc2010exact} {\v{Z}}nidari{\v{c}} tried to obtain a nonequilibrium stationary state directly as a state $\rho$ which satisfies $L(\rho)=0$. Since he obtained a stationary state based on an ansatz, it is not obvious if this state is the unique stationary state and the system converges to it (and if 'the stationary state' he obtained satisfies the condition of state, $\rho\ge0$). And general potential case and higher dimensional case are not discussed in \cite{vznidarivc2010exact}.

The model discussed in this paper is described by a finite dimensional open system. As mentioned in 1 Introduction, there is a different approach that considers the Hamiltonian dynamics of the total system including infinitely extended reservoirs \cite{bruneau2016conductance}. In their model, the current in nonequilibrium stationary state is evaluated by
\[ \int^{\mu_L}_{\mu_R}\frac{dE}{\|T_N(E)\|^2} \]
\cite{bruneau2016conductance,bruneau2016absolutely}, where $\mu_L,\mu_R\ (\mu_L>\mu_R)$ are chemical potentials of the reservoirs. The difference between our model and this model is only the region of integral, one is $\mathbb{R}$ and the other is $[\mu_R,\mu_L]$. But by Theorem 3.4, if $[\mu_R,\mu_L]$ is sufficiently large, this difference does not matter and both model give the same prediction for the asymptotic behavior.

\bibliography{transport}
\bibliographystyle{unsrt}

\end{document}